\documentclass[journal]{IEEEtran}
\usepackage{graphicx}
\usepackage{amsthm}
\usepackage{epsfig}
\usepackage{latexsym}
\usepackage{amsfonts}
\usepackage{here}
\usepackage{rawfonts}
\usepackage[T1]{fontenc}
\usepackage{calc}
\usepackage{capitalgreekitalic}
\usepackage{url}
\usepackage{enumerate}
\usepackage{color}
\usepackage[tbtags]{amsmath}
\usepackage{amssymb}
\usepackage{upref}
\usepackage{epic,eepic}
\usepackage{soul}
\usepackage{times}
\usepackage{dsfont}
\usepackage{comment}
\usepackage{cite}
\usepackage{bbm} 
\usepackage{amsmath}
\usepackage{microtype} 
\usepackage{afterpage}
\usepackage{makecell}
\newtheorem{lemma}{\bf Lemma}

\usepackage{dsfont}
\newcounter{step}
\newlength{\totlinewidth}
\newenvironment{algorithm}{%
  \rule{\linewidth}{1pt}
  \begin{list}{}%
    {\usecounter{step}%
      \settowidth{\labelwidth}{\textbf{Step 2:}}%
      \setlength{\leftmargin}{\labelwidth}%
      \setlength{\topsep}{-2pt}%
      \addtolength{\leftmargin}{\labelsep}%
      \addtolength{\leftmargin}{2mm}%
      \setlength{\rightmargin}{2mm}%
      \setlength{\totlinewidth}{\linewidth}%
      \addtolength{\totlinewidth}{\leftmargin}%
      \addtolength{\totlinewidth}{\rightmargin}%
      \setlength{\parsep}{0mm}%
      \raggedright}}%
  {\end{list}%
  \rule{\linewidth}{1pt}}
\newcounter{substep}

  {\end{list}}

\newlength{\aligntop}
\newlength{\alignbot}
 \makeatletter
\renewenvironment{align}{%
  \vspace{\aligntop}
  \start@align\@ne\st@rredfalse\m@ne
}{%
  \math@cr \black@\totwidth@
  \egroup
  \ifingather@
    \restorealignstate@
    \egroup
    \nonumber
    \ifnum0=`{\fi\iffalse}\fi
  \else
    $$%
  \fi
  \ignorespacesafterend%
  \vspace{\alignbot}\par\noindent
} \makeatother

\IEEEoverridecommandlockouts

\newcommand{\nbl}[1]{#1}

\makeatletter
\newcommand{\nblcaption}[1]{%
  \let\nbl@old@makecaption\@makecaption
  \long\def\@makecaption##1##2{%
    \nbl@old@makecaption{\nbl{##1}}{\nbl{##2}}}%
  \caption{#1}%
  \let\@makecaption\nbl@old@makecaption
}
\newcommand{\nblbibkey}[1]{\expandafter\gdef\csname nblbib@#1\endcsname{}}
\nblbibkey{lyu2026quantizationawarecollaborativeinferencelarge}
\nblbibkey{JMLR:v21:20-081}
\nblbibkey{lowe2017multi}
\nblbibkey{tan1993multi}
\newcommand{\nbl@bibcolor}[1]{\color{black}}
\let\nbl@old@bibitem\@bibitem
\def\@bibitem#1{%
  \nbl@bibcolor{#1}%
  \nbl@old@bibitem{#1}%
  \nbl@bibcolor{#1}}
\let\nbl@old@lbibitem\@lbibitem
\def\@lbibitem[#1]#2{%
  \nbl@bibcolor{#2}%
  \nbl@old@lbibitem[#1]{#2}%
  \nbl@bibcolor{#2}}
\makeatother
\usepackage{algorithm}%,algpseudocode}
\usepackage{algorithmic}
\makeatletter
\newcommand\semihuge{\@setfontsize\semihuge{19.3}{25}}
\makeatother

\makeatletter
\newcommand\semismall{\@setfontsize\semihuge{12.4}{15}}
\makeatother

\renewcommand{\algorithmicrequire}{ \textbf{Input:}} %Use Input in the format of Algorithm
\renewcommand{\algorithmicensure}{ \textbf{Init:}} %UseOutput in the format of Algorithm
\usepackage{subfigure}
\begin{document}

\title{\Huge Optimization of Collaborative Semantic Communication Network Performance with Channel and Content Preference Feedback}

\author{\large{
Defeng Zhou, Dongyu Wei, \textit{Graduate Student Member, IEEE,} Siyao Li, \\ Mingzhe Chen, \textit{Senior Member, IEEE}

\thanks{Defeng Zhou, Dongyu Wei, and Mingzhe Chen are with the Department of
Electrical and Computer Engineering, University of Miami, Coral Gables, FL,
33146, USA (e-mail: \{defengzhou, dongyu.wei, mingzhe.chen\}@miami.edu)

Mingzhe Chen is also with the Frost Institute for Data Science and
Computing, University of Miami, Coral Gables, FL 33146 USA.

Siyao Li is with the Department of Electrical Engineering and Computer Science, Embry-Riddle Aeronautical University,
Daytona Beach, FL 32114, USA (email: lis14@erau.edu)}
\vspace{-0.9cm}} 
% \IEEEauthorblockA{
% \IEEEauthorrefmark{1}Department of Electrical and Computer Engineering, University of Miami, Coral Gables, FL, 33146, USA\\
% \IEEEauthorrefmark{2}Department of Industrial and Systems Engineering, University of Miami, Coral Gables, FL, 33146, USA\\
% \IEEEauthorrefmark{3}Frost Institute for Data Science and Computing, University of Miami, Coral Gables, FL, 33146, USA\\
% Emails:\{defengzhou, dongyuwei, yehu, mingzhechen\}@miami.edu}

%\vspace*{1em}\\ 
}%\vspace*{-2em}

%\thanks{Hanzhi Yu and Mingzhe Chen are with the Department of Electrical and Computer Engineering and Frost Institute for Data Science and Computing, University of Miami, Coral Gables, FL 33146 USA (Emails: \protect\url{{hanzhiyu, mingzhe.chen}@miami.edu)}.} 
%\thanks{Yuchen Liu is with the Department of Computer Science, North Carolina State University, Raleigh, NC 27695 USA (Email: \protect\url{yuchen.liu@ncsu.edu}).}
%R\thanks{This work was supported by the U.S. National Science Foundation under Grants CNS-2312139 and CNS-2312138. }
%
%

%
\maketitle
%
%%
%\vspace{0cm}
\begin{abstract}
\nbl{Semantic communications is attracting attention due to its ability to transmit data meaning rather than raw bits. However, existing semantic communication frameworks treat and transmit all image regions with equal importance, which is not practical for real-world applications which may prioritize different content in an image.
To address this issue, we propose a novel semantic communication framework that enables a transmitter to use limited channel and content feedback to prioritize the transmission of important image regions.
In particular, in the proposed framework, a base station (BS) divides each image into sub-images, extracts their semantic information, and transmits them to users according to their preferences.
The users will reconstruct the image based on the received sub-images and cooperatively decide when to send channel state information (CSI) or content-preference feedback under dynamic channels and limited resources.
We formulate an optimization problem to minimize the semantic-weighted mean square error between the original image and the regenerated image by optimizing sub-channel allocation, users' power allocation, and feedback selection. To address this problem, a value decomposition actor-critic (AC) with dynamic neighborhood construction (VDAC-DNC) scheme is proposed.
The proposed method combines AC with value decomposition networks to allow the BS to approximate discrete actions by a continuous action distribution, thus reducing the output dimension and improving training efficiency. The introduced DNC method further improves training efficiency by constructing a small discrete neighboring action space to search for an action with the maximum Q value, thus avoiding traversing the large discrete action space.
Simulation results show that the proposed VDAC-DNC scheme can improve the performance by up to 5.04\% and 18.55\% compared to the standard multi-agent QAC method and the proposed method without feedback transmission. 
% Meanwhile, the proposed method with DNC achieves a 2.9x-5.3x speedup over the proposed method without DNC.
}
\end{abstract}

\begin{IEEEkeywords}
Semantic communication, image transmission, content preference feedback, actor-critic.
\end{IEEEkeywords}

\section{Introduction}\label{se:intro}
% semantic communication (image) importance
% challenge
% Modern communication systems continue to push toward the Shannon capacity limit, which defines the maximum reliable data rate of a physical channel~\cite{9562559}. At the same time, 
The integration of communication and artificial intelligence (AI) is moving wireless communications beyond bit-level transmission toward meaning-level exchange~\cite{10554663, 10646587}. In this paradigm, semantic communication (SC) has emerged as a key technique, enabling transmitters to extract and transmit the ``meaning'' of data rather than raw bits, thereby allowing receivers to reconstruct the data or infer task outputs (e.g., classification labels~\cite{sagduyu2023task}, answers~\cite{xie2021task}) with significant reduced latency and resource consumption~\cite{10855638}. 
%Given its focus on data meaning instead of data bits, SC shifts the design from bit data rate maximization to data meaning maximization thus reducing data transmission latency and improve transmission efficiency under tight resource constraints~\cite{10855638}.
However, deploying SC techniques over wireless networks faces several challenges including: 1) data meaning extraction and representation, 2) semantic information allocation in dynamic wireless networks, and 3) content feedback-assisted semantic communication method design.
% 语义提取
% dynamic
% content（segmentation），preference（feedback）
\subsection{Related Works}
%
% Recently, several works in~\cite{zhang2023optimization, 9746335,10015684,10510413, 10500305, ding2024adaptive, WU2024519} have studied the optimization of image transmission in SC networks.
Recently, several works in~\cite{9746335,10388062, 10960269, 10015684, 10094735} have studied the optimization of image transmission in different wireless networks.
In~\cite{9746335}, the authors designed a deep joint source-channel coding (JSCC) scheme that adapts its compression rate to channel conditions by training a policy network to select features from the image that need to transmit.
The author in~\cite{10388062} investigated an end-to-end JSCC design by unifying the coding rate reduction maximization and the mean square error (MSE) minimization for image recovery and classification simultaneously.
The author in~\cite{10960269} proposed probabilistic graph-based method for image transmission, enabling semantic information to be compressed and transmitted more accurately while reducing communication energy consumption.
The authors in~\cite{10015684} introduced a simple code-mask–based JSCC that adjusts the compression rate to meet a target image quality (e.g., peak signal to noise ratio) under different wireless networks.
%In~\cite{10646587}, the authors used AIGC to 
In~\cite{10094735}, the authors employ Swin Transformer to extract semantic information of image and scaled the latent representation according to \nbl{channel state information (CSI)}, enhancing the ability of the model to deal with various channel conditions.
% In~\cite{li2025star}, the author designed a lightweight deep JSCC framework and introduced an improved channel state adaptive module to adapt to different channel conditions. The module is based on an attention mechanism and incorporates SNR information as a guidance factor to refine the semantic encoder-decoder, thereby achieving channel state adaption.
However, the works in~\cite{9746335,10388062, 10960269, 10015684, 10094735} treat all image regions with equal importance, which is not practical for real-world users who prioritize different content (e.g., a ``car'' content may be more semantically relevant than the ``sky'' in a traffic scenario). Without accounting for content preference, transmitters may waste bandwidth on irrelevant background features while important regions suffer from distortion.
%did not consider content importance. They only focus on the overall coding rate and do not apply differential compression based on content, making it impossible to allocate resources according to the importance of the content.
%没有根据内容进行不同的压缩，这样就无法根据content的重要程度进行resource allocation？

% When channel SNR is low and channel decoding is not able to guarantee the zero bit error rate, the quality of the recovering result will be dramatically reduced.
% This problem widely exists in conventional wireless communication netowe
% In~\cite{zhang2023optimization}, the authors extracted the semantic information from images and modeled them as scene graphs that captures the objects and their relationships in the original image.

%说是semantic，但没有明显的关注内容？
%没有关注dynamic

% 关注压缩与环境变化,resource allocation 没有强调的content，而是只关注整体的压缩率
% 只有CSI feedback，可以结合上面一起说
%还有一些文献研究关注图片中物体的重要性

To address this problem and further improve transmission reliability, feedback mechanisms are essential.
The benefit of CSI feedback in downlink transmission has also been studied in capacity region~\cite{Li2019LPEBC-capacity} and stability region~\cite{Li2019LPEBC-stability}.
The authors in~\cite{10510413} and~\cite{10500305} integrated channel feedback into semantic image transmission. In particular, the work in~\cite{10510413} used CSI feedback to adjust the compression rate of each image.
%analyzed semantic-level distortion with feedback, while a transformer-aided 
The authors in~\cite{10500305} sent the decoder's current estimation regarding to the source image to the transmitter so as to update the encoder and mitigate the impact of channel noise on end-to-end reconstruction quality.
%Some literature studies have also examined the importance of objects within images.
The authors in~\cite{ding2024adaptive} simply prioritized the foreground of the image and transmitted it while masking less informative background.
The authors in~\cite{WU2024519} divided image regions into two parts: regions of interest (ROI) and regions of non-interest (RONI), and designed an SC neural network that can preserve more information to transmit the ROI regions with high quality while transmitting RONI with data compression.
\nbl{The authors in~\cite{lyu2026quantizationawarecollaborativeinferencelarge} considered a collaborative inference between the transmitter and receiver, and investigated the use of inference results of the receiver to optimize quantization bit-width and computation frequencies of the transmitter, so as to improve inference efficiency.}
% They further jointly optimized the on-agent quantization bit-width and the computation frequencies of the agent and server under delay and energy constraints, thus enhancing collaborative inference efficiency.
% \nbl{The authors in~\cite{lyu2026quantizationawarecollaborativeinferencelarge} seperated the inference process into two parts: the local inference at the receiver and the global inference at the transmitter. They transmitted the inference result of the receiver to the transmitter as feedback, thus the transmitter can know the partial information of the inference to improve the inference efficiency.}
% enabled high quality transmission of regions of interest (ROI) by using more complicated semantic communication networks, while using less bandwidth to transmit regions of non-interest (RONI) with data compression.
However, current works in~\cite{10510413, 10500305, ding2024adaptive, WU2024519,lyu2026quantizationawarecollaborativeinferencelarge} did not consider how dynamic wireless channels and user data preference affect semantic image transmission.
%the image transmission over a network where wireless channel are dynamic, and data preference of different users simultaneously. 
%image transmissions over a network where wireless channel are dynamic, and transmitters do not have perfect CSI and data preference information. 
Specifically, with imperfect CSI, the transmitter may overestimate a poor channel thus resulting in packet loss, or underestimate a good channel to waste bandwidth.
Meanwhile, without data preference information, important image components may be scheduled on poor channels while less important regions are protected, which may increase semantic distortion and reduce task accuracy.
%These effects lead to unstable quality and higher latency, decreasing the performance of SC severely. 
Hence, when considering both dynamic wireless channels and user data preference, users must trade off between sending CSI to inform the \nbl{base station (BS)} about the status of wireless channels as well as sending content preference information to request important content. 
% \nrd{From an information-theoretic perspective, the benefit of CSI feedback in downlink broadcast settings has also been studied via the layered packet erasure channels, including characterizations of its capacity region~\cite{Li2019LPEBC-capacity} and stability region~\cite{Li2019LPEBC-stability}. }
% 在没有perfect CSI的情况下，
% either optimize only the overall image quality without modeling what is important to the user, or rely on pre-defined importance that is not decided by the user.
% Meanwhile, there is limited research focus on the user preference feedback and explicit meaning of image. Most of current methods do not concern the meaning of the image itself or only consider its implicit expression.
% 并不关注图像本身的意思或者隐式的表达，而我们的方法通过图像分割，以此显示的关注到图像的语义，
% 第一个preference的importance只取决于图像本身的特点，而不是用户的关注点，而不是随着用户动态改变，
% 第二个工作只简单把所有内容分为两类，不够flexible dynamic semantic 
% recently, semantic communication image transmission 优化的方向
% 考虑到很少有人把信道分配与图像
% 用户的喜好

%RL,动作空间大 我们引入DNC

A number of existing works have investigated reinforcement learning (RL) for optimizing semantic communication performance. 
%the use of RL for optimizing the performance of semantic communication.
% The authors in \cite{11025994} proposed an adversarial RL, which consists of an information RL and a question RL to transmit semantic information to users with heterogeneous knowledge.
The authors in~\cite{10681776} proposed a deep reinforcement learning driven joint position and
power optimization algorithm to maximize the semantic data transmission throughput.
In \cite{9832831}, an attention-based RL algorithm was developed to analyze the relationship between the original data and its semantic information.
% \nbl{The authors in \cite{10993496} designed a hierarchical RL-based optimization framework to joint optimize the mixed-integer nonlinear programming (MINLP) problem.}
However, these works~\cite{10681776, 9832831} do not consider the cooperation among different agents. 
Consequently, each agent's performance will be affected by the actions of other agents, thus reducing the transmission performance achieved by RL.
To address the performance degradation caused by the lack of collaboration among different agents, the authors in \cite{10149174} proposed a value decomposition-based deep Q-learning scheme multi-agent RL (MARL) to minimize the average transmission latency.
The authors in \cite{zhang2025semanticawareresourcemanagementcv2x} utilized a multi-agent deep deterministic policy gradient (MADDPG)-based MARL to dynamically allocate channels and power for transmission.
Nonetheless, these RL algorithms in~\cite{10149174, zhang2025semanticawareresourcemanagementcv2x} struggle to scale to large discrete action spaces since they must evaluate and explore a huge number of actions~\cite{10.1007/s10994-021-05961-4}.

% contribution
% 
% 现有的模型基本不考虑用户的preference根据用户的preference，决定sub-image的importance，在资源有限的情况下，进行合理的资源分配，确保其传输的稳定性
% 同时适合可变的CSI，preference
% 引入了一种高效的RL 算法
\subsection{Contributions}
The main contribution of our work is a novel channel and content preference feedback enabled SC framework that enables multiple users to use limited channel and content feedback for efficient and accurate image data transmission.
The key contributions include:
\begin{itemize}
    % \item We consider a novel channel and content preference feedback enabled SC framework in which a BS transmits the images to a set of users while users cooperatively transmit the feedback to the BS for further efficient and accurate image data transmission.
    % In particular, the BS first divides the images into several sub-images according to images and allocates these sub-images to different sub-channels and sends them to each user using SC techniques. After receiving the sub-images, users may transmit the CSI feedback (i.e., the representation of successfully received sub-images) or content preference feedback (i.e., the importance of different sub-image categories) to the BS, which will use this information to optimize sub-channel allocation.
    % Considering limited power for each user and the limited uplink bandwidth, users must cooperate to transmit the feedback under the required latency.
    % To minimize the MSE between the original images and the images regenerated by users, we formulate an optimization problem aiming to minimize MSE while satisfying the feedback transmission latency requirements of users by optimizing the BS's sub-image allocation, user power allocation and feedback selection.
    \item We consider a novel channel and content preference feedback enabled SC framework in which a BS transmits the images to a set of users while users cooperatively transmit the feedback to the BS for further efficient and accurate image data transmission.
    In particular, the BS first divides the images into several sub-images according to image content and allocates these sub-images to different sub-channels and sends them to each user using SC techniques. After receiving the sub-images, users may transmit the CSI feedback or content preference feedback to assist the BS in subsequent sub-image transmission.
    Considering limited power for each user and the limited uplink bandwidth, users must cooperate to transmit the feedback under the required latency.
    To minimize the semantic-weighted MSE between the original images and the images regenerated by users, we formulate an optimization problem aiming to minimize MSE while satisfying the feedback transmission latency requirements of users by optimizing the BS's sub-image allocation, user power allocation and feedback selection.
    % \item % user会竞争，
    % Considering limited power for each user and the limited uplink bandwidth, users must cooperate to transmit the feedback under the required latency. 
    % On the one hand, the earlier the user transmits the feedback to the BS, the BS will have more complete information of channel and sub-image to ensure the accuracy of image transmission. On the other hand, users must cooperate to transmit the feedback under the required latency due to the limited power and bandwidth of the uplink channel.  
    % To minimize the MSE between the original images and the images regenerated by users, we formulate an optimization problem aiming to minimize MSE while satisfying the feedback transmission latency requirements of users by optimizing the BS's sub-image allocation, user power allocation and feedback selection.
    % We formulate this problem as an optimization problem aiming to minimize the mean square error (MSE) between the original images and the images regenerated by users.

    \item To solve the optimization problem, a value decomposition-based actor-critic with dynamic neighborhood construction (VDAC-DNC) method is proposed, which enables the BS to allocate the sub-channel and users feedback selection and power allocation cooperatively. 
    Different from the standard value decomposition-based networks (VDN)~\cite{10.5555/3237383.3238080},
    the proposed method combines actor-critic with VDN, which allows the BS to approximate the discrete actions by a continuous action distribution, thus reducing the output dimension and improving RL training efficiency.
    % In addition, the proposed method is more
    % effective in finding better actions in huge discrete action space
    % since the designed method approximates the discrete action
    % space using a continuous action and selects an action with
    % the maximum $Q$ value from the constructed neighboring action space.
    \item To further improve the training efficiency, the introduced DNC method finds a better discrete action from the continuous action per RL iteration. In particular, the DNC method constructs a small discrete neighboring action space to search for an action with the maximum $Q$ value, thus avoiding traversing the large discrete action space.
In particular, the DNC uses a continuous action produced by a continuous policy, which is easier to optimize than a complex discrete policy over large discrete action spaces, to construct a small discrete neighboring action space. 
Then, DNC searches for an action with the maximum $Q$ value from the small neighboring action space, thus avoiding traversing the large discrete action space.
\end{itemize}
Simulation results show that the proposed VDAC-DNC scheme can improve the performance by up to 
5.04\%, 12.43\%, and 18.55\% compared to the standard multi-agent QAC (MAQAC) method~\cite{10.5555/3295222.3295385}, independent Q method, and the proposed method without CSI and content preference feedback information. Meanwhile, the proposed method with DNC achieves a 2.9x-5.3x speedup over the proposed method without DNC while maintaining performance.

The rest of this paper is organized as follows.
The proposed channel and content preference feedback enabled SC framework and problem formulation are described in Section~\ref{se:system}.
The proposed VDAC-DNC scheme for optimizing sub-image allocation, power allocation and feedback selection is introduced in Section~\ref{se:algorithm}.
In Section~\ref{se:analy}, we analyze the convergence, implementation and complexity of the proposed scheme.
%We further analyze the information-theoretic bounds on the semantic reconstruction error under the proposed framework in Section~\ref{se:theory}.
Numerical simulation results are shown and analyzed in Section~\ref{se:simulation}.
Finally, Section~\ref{se:conclusion} concludes this paper.
\section{System Model and Problem Formulation}\label{se:system}
\begin{figure}
    \centering
    \includegraphics[width=1\linewidth]{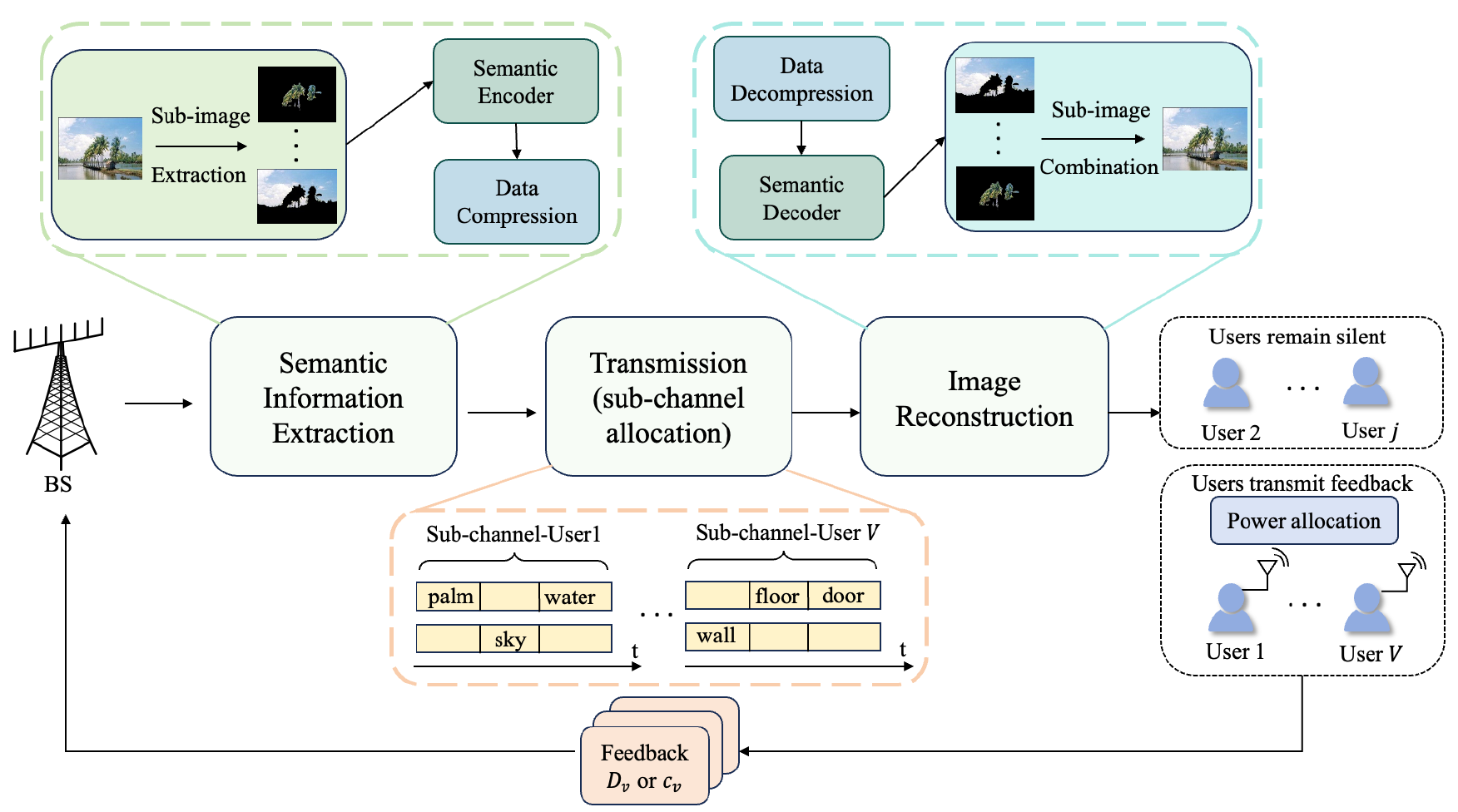}
    \nblcaption{Channel and preference feedback enabled semantic communication framework}
    \label{system}
\end{figure}

We consider a channel and preference feedback enabled SC system where a BS sends images to a set $\mathcal{V}$ of $v$ users using SC techniques, as illustrated in Fig.~\ref{system}. 
In particular, to send an image to a user, the BS
divides the image into several sub-images according to image content.
For example, the BS will extract "palm", "sky" and "water" from the original image and generate three sub-images.
Then, the BS uses a variational autoencoder (VAE) based semantic encoder~\cite{Rombach_2022_CVPR} to encode sub-images according to the downstream applications (e.g., image classification or object detection) of the receiver.
Hereinafter, we define the output of the semantic encoder as the semantic information of the sub-image.
%$I_k^j$ as the $j$-th class of $k$-th image semantic information.
The BS will allocate sub-image semantic information to different sub-channels and send them to the receiver over time-varying wireless channels.
At the receiver, a semantic decoder is used to reconstruct sub-images. Then, the receiver merges the sub-images to regenerate the original image.
Based on the regenerated image, the receiver will send CSI (i.e., the representation of successfully received sub-images) or content preference feedback (i.e., the importance of different sub-image categories) to the BS such that the BS can optimize sub-image allocation and feedback selection to improve semantic transmission performance.   
%the receiver can send two types of feedback to the transmitter.
% One is the channel state information (CSI) feedback. It mainly reports how many packets were successfully received, enabling the transmitter to adjust and further improve the quality of the reconstructed image.
% The other is the downstream task feedback. 
% It's about matching the transmission focus of the image to the user's needs.
% This directs the transmitter to prioritize some parts of the image according to the user's specific application or task requirements.
%先按照图像占比传输，若是用户不满意，则给出一个主体类别作为最重点传输！
%交通数据集：有的用户需要traffic light，有的用户侧重car，有的用户侧重行人
\nbl{The designed framework can be used for applications that need to transmit images such as autonomous driving, traffic monitoring, remote inspection, and sensing.}
Next, we first introduce the methods used for semantic information extraction. Then we introduce the semantic information transmission model and image reconstruction.
Finally, we formulate the optimization problem.
% it extracts semantic information from each category and transmits the extracted semantic information to the users according to their requirements.
% In particular, the considered semantic communication process consist of four phases: a) image data preprocessing, b) semantic information extraction, c) semantic-oriented resource allocation and  d) original image data recovery.

\subsection{Semantic Information Extraction}
%总分结构！
% segmentation
% mask
Semantic information extraction includes three key steps: 1) sub-image extraction, 2) semantic information generation, and 3) sub-image compression, which will be introduced in detail as follows.
\subsubsection{Sub-image Extraction}
We use the method in~\cite{wang2023internimageexploringlargescalevision} to divide an image into several sub-images according to image objects.
% InternImage is a new large-scale CNN-based foundation model, achieves performance comparable to or better than state-of-the-art ViTs on challenging benchmarks. 
Compared to other sub-image extraction methods (e.g., sliding-window convolutional neural network (CNN)~\cite{9880273} and fixed-grid vision transformer (ViT)~\cite{9710580}), the method from~\cite{wang2023internimageexploringlargescalevision} can identify important and meaningful pixels, achieving large receptive fields for sub-image extraction.
\nbl{Here, we can use any other semantic segmentation methods, which will not affect our designed semantic communication framework.}
% InternImage consists of a convolutional stem, multiple hierarchical stages of DCNv3-based deformable convolution blocks (each block including layer normalization and a feed-forward network), and a final classification head.
% Assuming the set of images for transmission is $\mathcal{I} = \{\boldsymbol{I}_1, \boldsymbol{I}_2, \dots, \boldsymbol{I}_K\}$.
Let $\boldsymbol{I}_{k, v} \in \mathbb{R}^{H \times W \times C}$ be an original image $k$ that the BS needs to transmit to a user $v$ with $H$, $W$ and $C$ being the height, width, and the number of color channels of the image. 
%the original images that are going to be transmitted are denoted as $\mathcal{I} = \{ \boldsymbol{I_1, \boldsymbol{I}_2}, \cdots, \boldsymbol{I}_K\},$
%The $k$-th image's segmention result is obtained based on the internimage model function $f_{\text{s}}$, denoted as follows: 
Then, the set of sub-images that are extracted from image $\boldsymbol{I}_{k,v}$ using the model~\cite{wang2023internimageexploringlargescalevision} is $\mathcal{M}_{k,v} = \{\boldsymbol{m}_{k,v}^1, \boldsymbol{m}_{k,v}^2, \dots, \boldsymbol{m}_{k,v}^{J_{k,v}}\}$ where $J_{k,v}$ is the number of sub-images in image $\boldsymbol{I}_{k,v}$ which varies according to the image $\boldsymbol{I}_{k,v}$, $\boldsymbol{m}^{j}_{k,v}$ is sub-image $j$ in image $\boldsymbol{I}_{k,v}$.
%Note that the value of $J$ is determined by $\boldsymbol{I}_{k}$.
% \begin{equation}
      
% \end{equation}

\subsubsection{Semantic Information Generation}
A VAE is used to generate semantic information. 
Although the size of the VAE input is not required to be fixed, the height and width of the input must be divisible by the VAE's downsampling factor (typically 8 or 16). However, the size of the sub-images is different which depends on the objects in the sub-images. For simplicity, we unify the size of the sub-images such that we can use one VAE model to process various sub-images and generate semantic information. 
To this end, we use a padding method that changes the size of the sub-image to the size of the original image by adding zeros around each sub-image.
% We denote $\hat{\boldsymbol{m}}^j_k$ as the sub-image $j$ with the same spatial dimensions as the original $\boldsymbol{I}_{k}$ after padding. 
%in which only pixels belonging to class $j$ are preserved and all other pixels are set to black.
% For example, we assume the $\hat{\boldsymbol{m}}^j_k$ represents the 'dog' and has the equal size to the original image, however, it only contains the pixels that are classified into the 'dog' and all other pixels are set to black.
% Next, we use the encoder, denoted as $f^{\textrm{EN}}(\cdot; \boldsymbol{\theta})$, where $\boldsymbol{\theta}$ represents the model parameters, to encode the selected sub-images into semantic information.
% The encoded symbol sequence of $j$-th class in $k$-th image $\boldsymbol{x}_k^j$ can be expressed as:
Then, the semantic information of padded sub-image $\widetilde{\boldsymbol{m}}^j_{k,v}$ is
\begin{equation}
    \boldsymbol{x}_{k,v}^j = f_{\boldsymbol{\theta}}^{\textrm{E}}(\widetilde{\boldsymbol{m}}^j_{k,v}),
\end{equation}
where $f_{\boldsymbol{\theta}}^{\textrm{E}}(\cdot)$ is the semantic encoder used to generate the semantic information of sub-image with $\boldsymbol{\theta}$ being the encoder parameters.
Here, the encoder can be applied for all users.
\subsubsection{Sub-image Compression}
% importance order
% Due to the transmission delay constraint and limited bandwidth, the BS must determine partial sub-images (i.e., a subset of original image) to be transmitted. 
% Hence we introduce the importance order, which indicates the importance of each sub-image. Then we can select sub-images which are the most necessary for image recovery and user's comprehension to transmit according to the importance order. Note that the importance order can be changed by the user's task feedback.
% Hence we assign each sub-image an importance score reflecting its contribution to overall image reconstruction and users' favor. We then select the highest-scoring sub-images for transmission. Note that this importance ordering can be dynamically updated in response to users' task feedbacks.
% compress
% encoder 
% 先讲结果再讲原因，还是先讲原因，再讲结果
% The sub-image can be compressed to save the transmission resource since large continuous areas with the same values are introduced in semantic information generation processing. 
% The feature generated from the filling black regions tend to be the same in value since convolution operations use shared kernels. 
% Since we use a padding method to increase the size of sub-images to be equal to the size of the original image, a large number of zero-valued pixels still exists in the sub-images.
Since $\widetilde{\boldsymbol{m}}^j_{k,v}$ was processed by a padding method, a large number of zero-valued pixels exist. 
Hence, we need to use a compression method to further reduce the redundancy in the sparse semantic information $\boldsymbol{x}_{k,v}^j$.
% If these repeated values are considered as zeros, 
% the feature matrix can be viewed as relatively sparse.
%If these repeated values are considered as , 
%In order to reduce the resource used to transmit the feature sparse matrix, 
Here, we propose to use the algorithm in~\cite{WU2024519} to compress the semantic information $\boldsymbol{x}_{k,v}^j$ of each padded sub-image $\widetilde{\boldsymbol{m}}_{k,v}^j$.
Let $f^{\textrm{C}}(\cdot)$ be the data compression function.
The compressed semantic information is $\hat{\boldsymbol{x}}_{k,v}^j = f^{\textrm{C}}(\boldsymbol{x}_{k,v}^j)$.
Here, we can also consider other compression methods, which do not affect the designed semantic data transmission methods. 

\subsection{Transmission Model}
Next, we introduce 1) the downlink transmission of sub-images from the BS to users, and 2) the uplink transmission of feedback from users to the BS.
\subsubsection{Downlink Sub-image Transmission}
% A time-slotted frame structure is employed, with the set of
% time slots denoted as $\mathcal{T} = \{1, 2, \dots  , T\}$.
We assume that the BS has a total of $N V$ channels and each user can use $N$ sub-channels for sub-image transmission. 
Let $\mathcal{L}_v$ be a set of sub-channels that the BS uses to serve user $v$.
% The BS can use a set $\mathcal{N}$ of $N$ sub-channels to transmit semantic information of sub-images for each user. 
% Let $\mathcal{L}_v$ represents a set of sub-channels that are used to serve user $v$.
These sub-channels have different channel conditions, and hence, they have different data rates and packet error rates.
%where $K$ is the number of original images that are going to be transmitted. 
% 为了简便，我们考虑每传完一张图片，再开始传下一张图片。为了防止一张图片占用过多时间，我们规定一张图片最多只能占用T个time slot，一旦超过时间，将强制开始传输下一张图片
% For simplicity, we assume that the transmission of each image is completed before the transmission of the next one begins. 
% To prevent a single image from occupying excessive time, we impose a constraint that each image can use at most $T$ time slots. Once this limit is exceeded, the system is forced to switch to transmitting the next image.
% We assume that the channel state is static over $T$ time slots.
% When the channel state changes, the number of packets that the receiver can successfully receive is changed.
% It also denotes how many layers of packets can be successfully received in our considered time-varying wireless channels.
% In particular, under the LPEC model, at a channel use the transmitter emits a vector of packet layers, and the receiver successfully recovers the first $S$ layers. 
% When a layer is erased, all higher-indexed layers with lower SNR are also erased, this correlated erasure berhavior captures the high-SNR dynamics of the fading channel.
% In our model, we consider a set $\mathcal{Q}$ of $Q$ downlink orthogonal resource blocks (RBs) can be allocated to serve the user.  Here, $\alpha_{q}^t$ implies that RB $q$ in the $t$-th time slot is allocated; otherwise, we have $\alpha_{q}^t = 0$. 
The downlink channel capacity of the BS transmitting semantic information to user $v$ over sub-channel $n$ at time slot $t$ is
% \begin{equation}
%     R_{n,v}^{\textrm{D}}(t) = 
%     \begin{cases}
%         \log_2\!\left(1+ \dfrac{P^{\textrm{D}} \boldsymbol{h}_{n,v}^{\textrm{D}}(t)}{I_{n,v}^{\textrm{D}} +  N_0}\right),   \textrm{w/ updated CSI} , %\nbl{ \triangleq R_{n,v}^{\textrm{D, CSI}}(t)}, 
%         \\[4pt]
%         \log_2\!\left(1 -\dfrac{P^{\textrm{D}} d^{-2}_v}{I_{n,v}^{\textrm{D}} +  N_0} \ln (1-\xi)\right),    \textrm{w/o updated CSI}.%\nbl{ \triangleq R_{n,v}^{\textrm{D, noCSI}}(t)}, 
%     \end{cases}
%     \label{downrate}
% \end{equation}
\begin{equation}
\begin{aligned}
R_{n,v}^{\textrm{D}}(t)=
\begin{cases}
\log_2\!\left(1+\dfrac{P^{\textrm{D}}\boldsymbol{h}_{n,v}^{\textrm{D}}(t)}{I_{n,v}^{\textrm{D}}+N_0}\right),~\text{with updated CSI},\\[2pt]
\log_2\!\left(1-\dfrac{P^{\textrm{D}}d_v^{-2}}{I_{n,v}^{\textrm{D}}+N_0}\ln(1-\xi)\right),~\text{otherwise},
\end{cases}
\end{aligned}
\label{downrate}
\end{equation}
% \begin{equation}
%     R_{n}^{\textrm{D}}(t) = 
%     \begin{cases}
%         \log_2\!\left(1+ \dfrac{P^{\textrm{D}} \boldsymbol{h}_{n}^{\textrm{D}}(t)}{I_{n}^{\textrm{D}} +  N_0}\right), & \textrm{with CSI}, %\nbl{ \triangleq R_{n,v}^{\textrm{D, CSI}}(t)}, 
%         \\[4pt]
%         \log_2\!\left(1 -\dfrac{P^{\textrm{D}} d^{-2}}{I_{n}^{\textrm{D}} +  N_0} \ln (1-\xi)\right), &  \textrm{without CSI}.%\nbl{ \triangleq R_{n,v}^{\textrm{D, noCSI}}(t)}, 
%     \end{cases}
%     \label{downrate}
% \end{equation}
where $P^{\textrm{D}}$ is the transmit power of the BS, $I_{n,v}^{\textrm{D}}$ is the interference caused by BSs that are located in other service areas and use sub-channel $n$, $N_0$ is the noise power spectral density, and $\boldsymbol{h}_{n,v}^{\textrm{D}}(t) = \gamma_n^{\textrm{D}} d_v^{-2}$ is the channel gain of sub-channel $n$ in time slot $t$ between the BS and user $v$ with $\gamma_n^{\textrm{D}} \sim \exp(\lambda =1)$ being the Rayleigh fading parameters of sub-channel $n$ and $d_v$ being the distance between the BS and user $v$, $\xi$ is the outage probability.
From~\eqref{downrate}, we see that when the BS does not have the updated CSI, the downlink transmission rate follows the standard outage-capacity formulation~\cite{tse2005fundamentals}.
\nbl{In particular, the BS does not have updated CSI in two cases: 1) the user does not transmit CSI feedback, or 2) the user transmitted a CSI feedback but the receiver did not receive it within the feedback delay constraint. }
% In these cases, the BS cannot get the instantaneous CSI and therefore cannot adapt the downlink transmission
% rate to the instantaneous channel realization and therefore uses the statistical channel distribution
% and the target outage probability to determine the no-CSI transmission rate.
% When the BS does not have access to instantaneous downlink CSI of user $v$ on sub-channel 
% $n$, the transmit rate cannot be adapted to the fading realization $\boldsymbol{h}_{n,v}^{\textrm{D}}(t)$. In this case, following the standard outage-capacity formulation for slow Rayleigh fading channels~\cite{tse2005fundamentals}, the constant downlink transmission rate $R_0^{\textrm{D}}$ is
% \begin{equation}
%     R_{n,v}^{\textrm{D}}(t) =   \log_2 (1 -\frac{P^{\textrm{D}} d^{-2}_v}{I_{n,v}^{\textrm{D}} +  N_0} \ln (1-\xi)), \textrm{without CSI}
% \end{equation}
% Here, $\alpha_{kj}^n(t)$ is the allocation indicator and is defined as
% \begin{equation}
%     \alpha_{kj}^n(t) = \left\{
%     \begin{aligned}
%         & \alpha^n_{kj}, &  (k-1)T \le t \le kT, \\
%         & 0, & \textrm{else},
%     \end{aligned}
%     \right.
% \end{equation}
% where $\alpha_{kj}^n \in \{0,1\}$ is a binary indicator.

\subsubsection{Uplink Feedback Transmission}
Here, we assume users can obtain the perfect CSI via pilot-assisted channel estimation.
Thus, the data rate of user $v$ transmitting feedback to the BS at time slot $t$ is
\begin{equation}
\label{eq:feedback-rate}
    R_{v}^\textrm{U} (t) = \frac{B}{g(t)} \log_2 (1+ \frac{g(t)p_v^\textrm{U}(t) \gamma^\textrm{U} d_v^{-2}}{ B N_0}), 
\end{equation}
where $p_v^\textrm{U}(t)$ is the transmit power of user $v$ at time slot $t$, $g(t)$ is the number of users that want to transmit feedback to the BS, $B$ is the total uplink bandwidth, $N_0$ is the noise power spectral density, and $\gamma^{\textrm{U}}$ is the uplink Rayleigh fading parameter. 
% Thus, the latency of user $v$ transmitting selected feedback to the BS is
% \begin{equation}
%     \tau_v(\beta_{\textrm{D},v}(t), \beta_{\textrm{C},v}(t)) = \frac{\beta_{\textrm{D},v}(t)Z( \boldsymbol{D}_v(t)) + \beta_{\textrm{C},v}(t)Z(\boldsymbol{c}_v(t))}{R^{\textrm{U}}_v(t) },
% \end{equation}
% where $Z(\boldsymbol{D}_v(t) )$ and $Z(\boldsymbol{c}_v(t))$ are the data sizes of content feedback and CSI feedback. These feedbacks will be introduce in Section~\ref{sec:fb}.

\subsection{Image Reconstruction}
%decoder
Image reconstruction includes the following three key steps: 1) Data decompression, 2) sub-image regeneration, and 3) sub-image combination, which will be introduced in detail as follows.
%image combination
\subsubsection{Data Decompression}
% As mentioned above, the data was compressed to save the bandwith resource.
% Since the size of 
Once the compressed data is received, the user restores it back to its original form for sub-image generation.
Here, we apply the restoration algorithm in~\cite{WU2024519}. 
% Let $f^{\textrm{R}}$ be the data restoration function. 
% Thus, the restored semantic information of sub-image $j$ is $\boldsymbol{x}^{j}_k = f^{\textrm{R}}(\hat{\boldsymbol{x}}^{j}_k)$, where $\hat{\boldsymbol{x}}^{j}_k$ is the successfully received compressed semantic information. 

\subsubsection{Sub-image Regeneration}
A semantic decoder of each user $v$ is used to regenerate the corresponding sub-image $\boldsymbol{s}_{k,v}^j$.
Since user $v$ may not be able to receive all the sub-images due to channel packet errors, we define $\boldsymbol{\eta}_{k,v}(t) = [\eta_{k1,v}(t), \eta_{k2,v}(t), \dots, \eta_{kJ,v}(t)]$ as a vector to represent whether the sub-images are received by user $v$.
In particular, $\eta_{kj,v} = 1$ implies that semantic information $\hat{\boldsymbol{x}}_{k,v}^j$ is received by user $v$, and $\eta_{kj,v} = 0$, otherwise.
The sub-channel allocation vector of compressed semantic information for user $v$ is $\boldsymbol{\alpha}_{k,v}(t) = \left[ \boldsymbol{\alpha}_{k1,v}(t), \dots ,\boldsymbol{\alpha}_{kj,v}(t), \dots, \boldsymbol{\alpha}_{kJ,v}(t) \right]$ with $\boldsymbol{\alpha}_{kj,v}(t) = [\alpha_{kj,v}^1(t), \dots,\alpha_{kj,v}^{L_v}(t)]$ being a sub-image allocation vector for sub-image $j$ in image $k$ of user $v$. 
%Considered the sub-image allocation, 
\nbl{Here, we assume that one sub-image cannot be split and transmitted over several sub-channels within the same time slot.}
The regenerated sub-image $\boldsymbol{s}_{k,v}^j(\boldsymbol{\alpha}_{kj,v}(t), \eta_{kj,v}(t))$ at user $v$ is 
\begin{align}\label{sub-image}
\boldsymbol{s}_{k,v}^j(\boldsymbol{\alpha}_{kj,v}(t), \eta_{kj,v}(t))=\left\{
\begin{aligned}
& \boldsymbol{s}_{k,v}^j,  \sum_{n=1}^N \boldsymbol{\alpha}_{kj,v}^n(t) =1,  \eta_{kj,v}(t) =1, \\
& 0,    \qquad\qquad\qquad\textrm{else} . 
% & 0, & \sum_{n=1}^N \boldsymbol{\alpha}_{kj}^n =0.
\end{aligned} 
\right. 
\end{align}
% \begin{equation}\label{sub-image}
% \boldsymbol{s}_k^j(\boldsymbol{\alpha}_{kj}(t), \eta_{kj}(t)) =
% \begin{cases}
% \boldsymbol{s}_k^j, & \sum_{n=1}^N \boldsymbol{\alpha}_{kj}^n(t) = 1,\ \eta_{kj}(t) = 1,\\
% 0,                  & \text{else}.
% \end{cases}
% \end{equation}
From~(\ref{sub-image}), we see that user $v$ can regenerate the original sub-image $\boldsymbol{s}_{k,v}^j$ when the BS uses a sub-channel to transmit the sub-image (i.e., $\sum_{n=1}^N\boldsymbol{\alpha}_{kj,v}^n(t) =1$) and user $v$ successfully receives it (i.e., $\eta_{kj,v} =1$).
In particular, $\alpha_{kj,v}^n(t) = 1$ implies that compressed semantic information $\hat{\boldsymbol{x}}_k^j$ is transmitted to user $v$ over sub-channel $n$ at time slot $t$, and $\alpha_{kj,v}^n(t) = 0$, otherwise.
% Thus, the recovered sub-image is
% \begin{equation}
%      = f_\lambda^{\textrm{D}}(\boldsymbol{x}^j_k),
% \end{equation}
% where $f_\lambda^{\textrm{D}}$ is the semantic decoder used to recover the sub-image with $\lambda$ being the decoder parameters.

\subsubsection{Sub-image Combination}
Each User $v$ combines the sub-images to regenerate the original image.
%Different from current works that directly summarize all sub-images~\cite{WU2024519},
Here, user $v$ will consider the maximum value of the same pixels in different sub-images as the value of the pixel in the regenerated image. This image regeneration method can preserve sharp boundaries since it can reduce the impact of zero elements added by the zero-padding method on sub-image regeneration.
%to reduce the impact of padding parts of the data
The value $\hat{i}_{k,p}$ of pixel $p$ of user $v$ in the regenerated image $\boldsymbol{S}_{k,v}$ is
\begin{equation}
    \hat{i}_{kp,v} = \textrm{max} \{ i_{kp, v}^1, i_{kp, v}^2, \dots, i_{kp, v}^J \}.
\end{equation}
%where $i_{k,p}^j$ is the value of the pixel $p$ in sub-image $\boldsymbol{s}_k^j$. 

\subsection{Feedback of Receiver}\label{sec:fb}
%CSI feedback and  preference feedback(score)
At each time slot, each user can 1) transmit either CSI feedback or content preference feedback to the transmitter, or 2) remain silent without sending any feedback. 
% One is the preference feedback, the other is the channel state information (CSI) feedback.
The content preference feedback transmitted by user $v$ at time slot $t$ is $\boldsymbol{D}_v(t) = [(\boldsymbol{u}_{1,v}(t), \omega_{1,v}(t)), (\boldsymbol{u}_{2,v}(t), \omega_{2,v}(t)), \dots, (\boldsymbol{u}_{D,v}(t), \omega_{D,v}(t))]$, where $\boldsymbol{u}_{d,v}$ is the content that user $v$ wants to receive and $\omega_{d,v}$ is the importance weight of content $\boldsymbol{u}_{d,v}$.
% each  key-value pair $(\boldsymbol{v}_d, \omega_d)$ is an interested category and its corresponding importance weight $\omega_d$.
Here, $\boldsymbol{D}_v(t)$ includes only the top $D$ categories that user $v$ wants to receive and user $v$ cannot send all content category information to the BS due to limited power.
% After receiving the object preference feedback, the BS allocates the sub-image by taking the importance weights into account (e.g., assigning more important sub-images to higher-quality channels).
% $\boldsymbol{p}(t) = [ p_1(t), p_2(t), \dots, p_J(t)]$ using the Sentence-T5 encoder~\cite{ni2021sentencet5scalablesentenceencoders}, where $p_j(t)$ denotes the preference score for sub-images $j$ of the original image. 
% Each score is obtained by calculating the semantic similarity between the category of the sub-image and the preference text given by the user using the Sentence-T5 encoder~\cite{ni2021sentencet5scalablesentenceencoders}. 
% If the score $p_j$ is higher, it indicates that the user pays more attention to the content of the sub-image. 
% If a sub-image has a high preference score, it implies that the receiver prefers to receive this sub-image more.
% Larger scores indicate greater interest in the associated sub-image.
% 内容偏好反馈包含了对原图J个类别的喜好评分。它是通过计算每个子类别与用户给出的偏好文字语义相似度得到。
% a text that indicates which category in the image the receiver is most interested in.
% Thus, the transmitter can prioritize specific objects and protect them accordingly by assigning them higher-quality sub-channels according to the preference feedback.
% The knowledge of accurate channel state information (CSI) at the transmitter is also essential to obtaining the transmission performance gains~\cite{9931713}.
The CSI feedback of user $v$ is defined as a vector $\boldsymbol{c}_v(t) = [c_{1,v}(t), c_{2,v}(t), \dots, c_{L_v,v}(t)]$ at time slot $t$. Here, $c_{n,v}(t) \in \{0,1\}$ indicates whether sub-image $\boldsymbol{s}^j_{k,v}$ is successfully transmitted to user $v$ over sub-channel $n$ in time slot $t$. 
%It can also reflect the channel gain $\boldsymbol{h}_n(t)$~\cite{perber}. 
Based on the CSI feedback, the BS can learn the channel conditions (e.g., the downlink channel gain $\boldsymbol{h}_{n,v}^{\textrm{D}}(t)$~\cite{perber}) and the sub-images that have been received by user $v$. 
Let $\beta_{\textrm{D},v}(t), \beta_{\textrm{C},v}(t) \in \{0,1\}$ be the feedback selection indicators where $\beta_{\textrm{D},v}(t)=1$ (e.g., $\beta_{\textrm{C},v}(t)=1$) implies that user $v$ will transmit content preference feedback (CSI feedback) to the BS at time slot $t$. Otherwise, we have $\beta_{\textrm{D},v}(t) = 0$ (e.g., $\beta_{\textrm{C},v}(t) = 0$). 
\subsection{Problem Formulation}
Given the defined system model, our objective is to minimize the sum of the users' semantic-weighted MSE between the original images and the images regenerated by the user.
% We aim to minimize the average MSE minization problem by jointly optimizing the sub-channel allocation $\widetilde{\boldsymbol{\alpha}} \triangleq \{ \boldsymbol{\alpha}_k \}_{k \in K}$, 
This minimization problem includes optimizing the sub-image allocation $\boldsymbol{\alpha} = \{\boldsymbol{\alpha}_1(t), \boldsymbol{\alpha}_2(t), \dots, \boldsymbol{\alpha}_K(t) \}_{t \in \mathcal{T}}$, user power allocation $\boldsymbol{p} = \{p^{\textrm{U}}_1(t), p^{\textrm{U}}_2(t), \dots, p_V^{\textrm{U}}(t) \}_{t \in \mathcal{T}}$ and user feedback selection $\boldsymbol{\beta} = \{\beta_{\textrm{D},1}(t),  \beta_{\textrm{C},1}(t), \dots, \beta_{\textrm{D},V}(t), \beta_{\textrm{C},V}(t) \}_{t \in \mathcal{T}}$.
We consider that the BS will transmit $K$ images to each user over a time period that consists of a set $\mathcal{T}$ of $K T$ time slots.
Each image must be transmitted within $T$ time slots.
The optimization problem is formulated as

	\begin{align}
    \min_{\boldsymbol{\alpha}, \boldsymbol{\beta}, \boldsymbol{p}} & 
    \sum_{v \in \mathcal{V}} \sum_{k \in \mathcal{K}} \sum_{j \in \mathcal{M}_k}\omega_{j,v}(t)\, \| \boldsymbol{m}^j_{k,v} - \boldsymbol{s}_{k,v}^j(\boldsymbol{\alpha_{kj,v}}(t),\eta_{kj,v}(t)) \|_2^2  \label{obj} \\ % 这里禁止编号
	\text{ s.t. } 
	&  \alpha_{kj,v}^n(t)\in \{0,1\}, \forall n \in \mathcal{L}_v, \forall j \in \mathcal{M}_{k,v}, \forall k \in \mathcal{K}, \forall v \in \mathcal{V}, \tag{\ref{obj}{a}} \label{alloc con} \\
    &  \alpha_{kj,v}^n(t) = 0, \forall t\in\mathcal{T}\setminus\mathcal{T}_k, \tag{\ref{obj}{b}}\label{slot-eq2} \\
    % & \sum_{t = t^{\prime}}^{t^{\prime}+qt}\sum_{j \in \mathcal{M}_k}\sum_{n \in \mathcal{N}} \alpha^n_{kj}(t) = J, \forall k \in \mathcal{K}, \\
    & \sum_{n \in \mathcal{L}_v}\sum_{t \in \mathcal{T}} \alpha^n_{kj,v}(t) \leq 1,  \forall j \in \mathcal{M}_k, \forall k \in \mathcal{K}, \forall v \in \mathcal{V}, \tag{\ref{obj}{c}} \label{block con}  \\
    & \sum_{j \in \mathcal{M}_{k,v}} \alpha^n_{kj,v}(t) \leq 1, \forall n \in \mathcal{L}_v, \forall t \in \mathcal{T}_k, \forall v \in \mathcal{V}, \tag{\ref{obj}{d}} \label{time con}  \\
    & \beta_{\textrm{D},v}(t) + \beta_{\textrm{C},v}(t) \le 1, \forall t \in \mathcal{T}, \forall v \in \mathcal{V}, \tag{\ref{obj}{e}} \label{fb con} \\
    % & \sum_{t \in \mathcal{T}}\beta_D(t) + \beta_c(t) \le B_{\textrm{max}}, \tag{\ref{obj}{f}}
    % \label{freq con} \\
    % & (\beta_{\textrm{D},v}(t) + \beta_{\textrm{C},v}(t)) P_v \tau_v(\boldsymbol{f}_v(t)) \leq E, \forall v \in \mathcal{V}, 
    % \tag{\ref{obj}{f}} \label{freq con}\\
    & \sum_{t \in \mathcal{T}} p_v^U(t) \leq p_{\text{max}}, \forall v \in \mathcal{V},    \tag{\ref{obj}{f}} \label{power con} \\
    & \frac{Z(\hat{\boldsymbol{x}}_{k,v}^j)}{R_{n,v}^{\textrm{D}}(t)} \le \bigtriangleup  t, \text{ if } \alpha_{kj, v}^n(t) = 1, \tag{\ref{obj}{g}} \label{rate con} \\
    & \beta_{\textrm{D},v}(t)Z( \boldsymbol{D}_v(t)) + \beta_{\textrm{C},v}(t)Z(\boldsymbol{c}_v(t)) \leq  R^{\textrm{U}}_v(t) \bigtriangleup t_f, \notag \\ 
    & \text{ if } \beta_{\textrm{D},v}(t) + \beta_{\textrm{C},v}(t) = 1 \tag{\ref{obj}{h}}, \label{fbt con} 
    % & p_v^U(t) \tau_v(\beta_{\textrm{D},v}(t),\beta_{\textrm{C},v}(t)) \leq E , \forall t \in \mathcal{T}, \forall v \in \mathcal{V} \tag{\ref{obj}{i}} \label{energy con}
    % & \sum_{t \in \mathcal{T}}\sum_{n \in \mathcal{N}} P_n(t) \leq P_{\textrm{max}}, \label{power con} \\
    % &\bar{\alpha}_{kj}^n \in\{0,1\},\quad \forall n\in\mathcal{N},\; j\in\mathcal{M}_k,\; k\in\mathcal{K}, \label{slotvar}\\
    % &\alpha_{kj}^n(t) = \bar{\alpha}_{kj}^n,\quad \forall t\in\mathcal{T}_k,\; \forall n,j,k, \label{slot-eq1}\\
	\end{align}
% \end{subequations}
% \nrd{SL: I noticed some inconsistencies in the notation, for example: $\beta_{\textrm{D},v}(t)$, $\beta_{\textrm{D},v}(t)$, $\beta_{\textrm{C},v}(t)$, $\beta_{\textrm{C},v}(t)$. Also, I defined some variables in~\eqref{eq:feedback-indicator}-\eqref{eq:feedback-success}. Feel free to integrate them into the optimization problem if you want.}
where $\mathcal{K}$ is the set of $K$ images that the BS must transmit to each user
%$\mathcal{M}_k $ is the set of $J_k$ sub-images in the original image $k$, 
and $\mathcal{T}_k \triangleq \{t \in \mathcal{T}: (k-1)T \leq t \leq kT \}$ is the set of time slots for transmitting the image $k$,
% $I^i_k$ and $S^i_k$ represent the pixel value $i$ of the preprocessed and recovery images, respectively. 
$\|\cdot\|_2$ denotes the $\ell_2$-norm, 
$Z(\hat{\boldsymbol{x}}_{k,v}^j)$ is the data size of semantic information $\hat{\boldsymbol{x}}_{k,v}^j$, 
$\bigtriangleup  t$ is the time duration of a time slot, 
$\bigtriangleup  t_f$ is the time duration of feedback transmission, 
% \begin{equation}
% $\tau_v(\beta_{\textrm{D},v}(t), \beta_{\textrm{C},v}(t)) = \frac{\beta_{\textrm{D},v}(t)Z( \boldsymbol{D}_v(t)) + \beta_{\textrm{C},v}(t)Z(\boldsymbol{c}_v(t))}{R^{\textrm{U}}_v(t) }$,
% \end{equation}
% where $Z(\boldsymbol{D}_v(t) )$ and $Z(\boldsymbol{c}_v(t))$ are the data sizes of content feedback and CSI feedback. These feedbacks will be introduce in Section~\ref{sec:fb}.
%$B_{\textrm{max}}$ is the maximum allowable number of feedback transmissions.
and $p_{\textrm{max}}$ is the available power of each user for feedback transmissions.
Constraints~(\ref{alloc con}), (\ref{slot-eq2}) and (\ref{block con}) ensure that the BS can transmit each sub-image once in specific time slots no matter whether the user $v$ correctly receives it.
Constraint~(\ref{time con}) implies that the BS can use only one time slot to transmit one sub-image.
Constraint~(\ref{fb con}) enforces that each user $v$ can only transmit one type of feedback to the BS per time slot.
Constraint~(\ref{power con}) limits the power used for feedback transmission with $p_{\text{max}}$ being users total power budget.
Constraints~(\ref{rate con}) and (\ref{fbt con}) are the requirements of each sub-image transmission and feedback transmission delay, respectively. 
% Constraint~(\ref{energy con}) limits the energy of transmitting feedback at each time slot with $E$ being the energy budget.
% Constraint~(\ref{power con}) limits the total transmit power to $P_{\textrm{max}}$.
% We also assume that  normalized time slot duration. Given that, the minimum data rate in each time slot $r_{q,n}^t$ is determined. Thus, we have
%(semantic encoder?)
% sub-image的生成与allocation，feedback选择存在隐式的关系，难以用准确的函数表述，同时约束较多，因此用传统优化算法求解几乎不可能，考虑到上述困难，我们提出用深度强化学习的方法进行求解

\nbl{The formulated problem in~(\ref{obj}) is challenging to solve via traditional optimization methods due to the following reasons.
First, from~(\ref{obj}), we see that the objective function depends not only on the sub-image allocation vector $\boldsymbol{\alpha}$ and power allocation vector $\boldsymbol{p}$, but also on variables such as user feedback selection $\boldsymbol{\beta}$, which cannot be expressed by explicit functions.
Second, since users can transmit a limited number of feedback messages, a tradeoff exists between content preference feedback and CSI feedback transmission. Here, more content preference feedback transmissions enable the BS to obtain more content categories that users are interested in, while more CSI feedback transmission enables the BS to learn more channel information (e.g., $\boldsymbol{h}_{n,v}^{\textrm{D}}(t)$) thus allocating appropriate content to each sub-channel. 
Third, due to limited energy, users cannot send feedback at each time slot and hence, the BS cannot know the real-time wireless environment and sub-image transmission status, which further complicates the optimization problem.
Finally, users must cooperatively transmit feedback since the bandwidth used for feedback transmission is limited. If all users transmit feedback simultaneously, they may not have enough bandwidth to ensure feedback be transmitted to the BS within the required delay.}
%with a limited number of feedback, the transmitter 
%Thus, the problem cannot be separated into smaller parts for traditional optimization methods.
%Fourth, the 
% we propose a masked reinforcement learning (RL) algorithm with dynamic neighborhood construction~\cite{akkerman2024dynamic} that enables the BS to schedule the sub-images based on their importance and the received feedback selected by the receiver so as to improve the MSE of all recovered images.
\section{Proposed Algorithm}\label{se:algorithm}
\begin{figure*}[t]
    \centering
    \includegraphics[width=15cm]{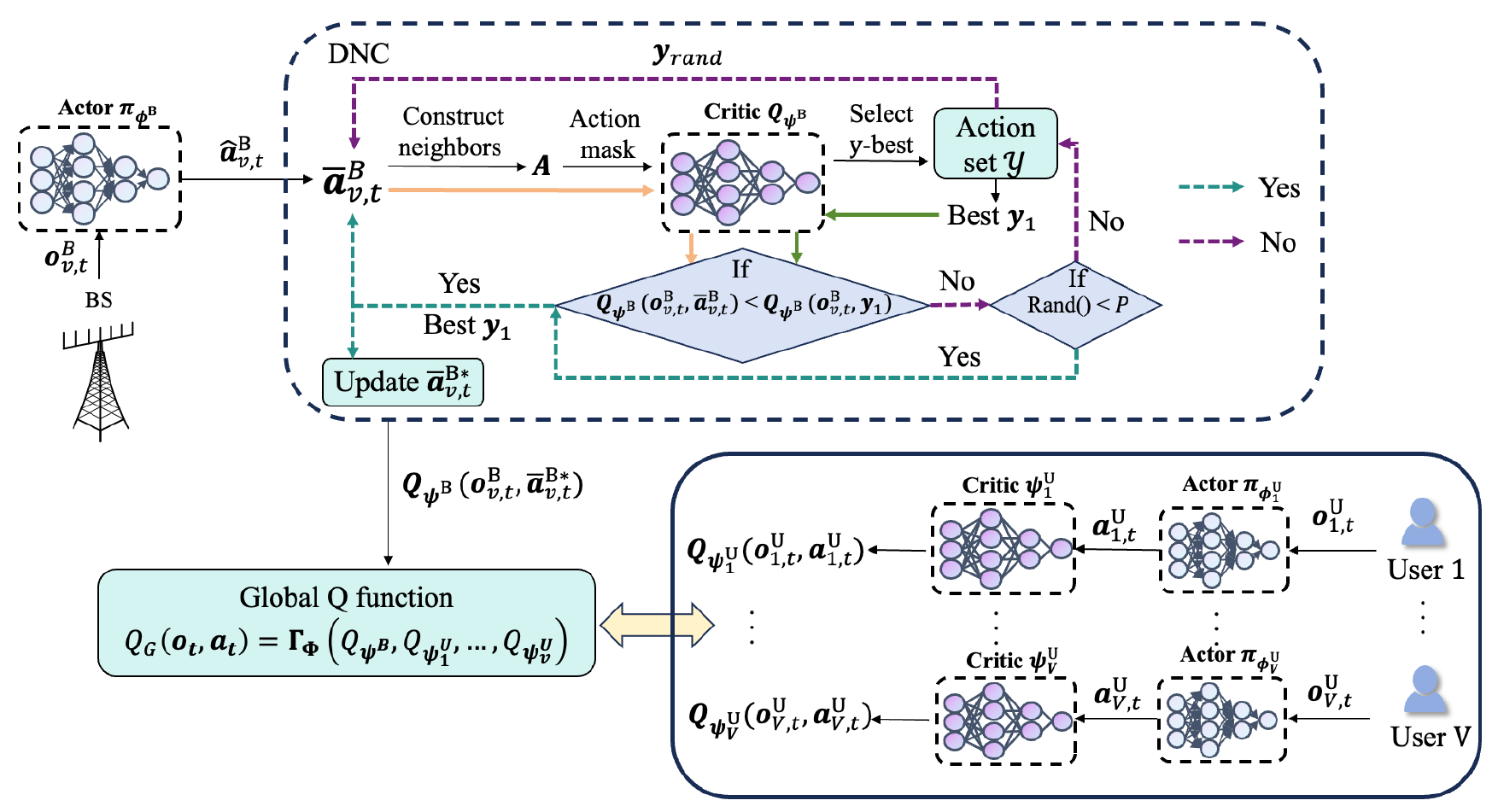}
    \nblcaption{The architecture of the proposed VDAC-DNC algorithm}
    \label{fig:algorithm}
    \vspace{-0.3cm}
\end{figure*}
% \begin{figure}[t]
%     \centering
%     \includegraphics[width=8cm]{Journal_Figure/algorithm.pdf}
%     \caption{The architecture of the proposed VDAC-DNC algorithm}
%     \label{fig:placeholder}
% \end{figure}
In this section, we introduce a VDAC-DNC algorithm to solve~(\ref{obj}).
% Compared to current MARL methods (e.g., multi-agent Q-learning like standard VDN~\cite{10.5555/3237383.3238080} has a higher algorithm performance with lower training efficiency, multi-agent actor-critic like MADDPG~\cite{lowe2017multi} has a higher training efficiency with lower algorithm performance), the proposed method combines the value decomposition paradigm and the actor critic paradigm, offering a reasonable trade-off between training efficiency and algorithm performance.
Compared with existing MARL methods such as VDN~\cite{10.5555/3237383.3238080} and MADDPG~\cite{10.5555/3295222.3295385}, the proposed approach combines the strengths of both value decomposition and actor–critic paradigms. Specifically, multi-agent Q-learning methods such as standard VDN achieve higher performance but with low training efficiency since they require extensive exploration and environment interactions, while multi-agent actor–critic methods such as MADDPG offer high training efficiency at the cost of reduced performance since they update the policy directly using gradient information and may easily get stuck in suboptimal solutions.
Thus, the proposed method offers a reasonable trade-off between training efficiency and algorithm performance.
Meanwhile, the proposed method is more effective in finding better actions in huge discrete action spaces since the designed method approximates the discrete action space using a continuous action and selects an action with the maximum $Q$ value from the continuous action space. 
%The actor generates a continuous action and uses critic-guided local search to pick a high-value discrete action with far less compute and memory.
% while enabling the agent to handle variable action dimensions by dynamically masking out invalid or irrelevant actions.
Next, we first introduce the components of the proposed VDAC-DNC method.
Then, we show the training process of the VDAC-DNC.
\subsection{Components of VDAC-DNC Method}
The components of the proposed VDAC-DNC method are specified as follows:

\textit{\textbf{Agent}}: 
%Our agent is the BS that determines sub-image transmission and user feedback selection.
The agents in the VDAC-DNC scheme are the BS and users.
At each time slot, the BS determines sub-image transmission, and then users determine whether to transmit a feedback and which type of feedback to transmit, as well as the power used for feedback transmission.
    
\textit{\textbf{BS States}}: 
The local state of the BS is used to describe the current observation of wireless environment and sub-images transmission status. The local state of the BS about user $v$ is defined as 
\begin{equation}
    \boldsymbol{o}_{v,t}^{\textrm{B}} = [\boldsymbol{z}_v(t), \boldsymbol{\omega}_v(t), \boldsymbol{c}_v(t), \boldsymbol{R}_v(t)],
\end{equation}
where $\boldsymbol{z}_v(t) = [Z(\hat{\boldsymbol{x}}_{k,v}^1), Z(\hat{\boldsymbol{x}}_{k,v}^2), \dots, Z(\hat{\boldsymbol{x}}_{k,v}^{J_{k,v}})]$ is a vector of the data sizes of the sub-images in an original image transmitted to user $v$,  $\boldsymbol{\omega}_v(t) = [\omega_{1,v}(t), \omega_{2,v}(t), \dots, \omega_{J_{k,v}}(t)]$ is the importance weight of the sub-images at the BS, $\boldsymbol{c}_v(t)$ is the CSI feedback, 
and $\boldsymbol{R}_v(t)= [R_{1,v}(t), R_{2,v}(t), \dots, R_{N,v}(t)]$ is the transmission rate of each channel.
Thus, the total local state of the BS is $\boldsymbol{o}_{\textrm{B}, t} =  [\boldsymbol{o}_{\textrm{B}, 1,t}, \boldsymbol{o}_{\textrm{B}, 2,t}, \dots,\boldsymbol{o}_{\textrm{B}, V,t}]$.
Note that the BS may not be able to receive the CSI and content preference feedback from the users at each time slot. Hence, when no feedback is received, $\boldsymbol{\omega}_v(t)$ and $\boldsymbol{c}_v(t)$ retain their values from the previous time slot (e.g., $\boldsymbol{\omega_v}(t) = \boldsymbol{\omega_v}(t-1)$, when $\beta_{\textrm{D},v}(t) = \beta_{\textrm{C}, v}(t) = 0$).

\textit{\textbf{User States}}: 
The local state of user $v$ is 
\begin{equation}
    \boldsymbol{o}_{v,t}^{\textrm{U}} =  [\hat{\boldsymbol{z}}_v(t), \hat{\boldsymbol{c}}_v(t), p^{\textrm{R}}_v(t)],
\end{equation}
where $\hat{\boldsymbol{z}}_v(t) = [Z(\boldsymbol{s}_{k,v}^1), Z(\boldsymbol{s}_{k,v}^2), \dots, Z(\boldsymbol{s}_{k,v}^{J_{k,v}})]$ is a vector of the data sizes of the received sub-images at user $v$, $\hat{\boldsymbol{c}}_v(t)$ is the current CSI, $p^{\textrm{R}}_v(t)$ is the remaining power of user $v$ at time slot $t$.
Hence, the joint state among all agents at time slot $t$ is a vector $\boldsymbol{o}_t = [\boldsymbol{o}_{t}^{\textrm{B}}, \boldsymbol{o}_{1,t}^{\textrm{U}}, \dots, \boldsymbol{o}_{v,t}^{\textrm{U}}]$.
    
\textit{\textbf{BS Actions}}: 
The actions of the BS at each step are sub-image allocation for each user. 
%and user feedback selection.
We define an action of the BS for user $v$ at time slot $t$ as $\boldsymbol{a}_{v,t}^{\textrm{B}} = [\delta_{1,v}(t), \delta_{2,v}(t), \dots, \delta_{N,v}(t)]$ with $\delta_{n,v}(t) \in \{0, J_{k,v}\}$ being the index of the sub-image transmitted to user $v$ and assigned to sub-channel $n$ at time slot $t$. 
% Note that the number of sub-image is different across images, thus the dimension of action is variable.
% We set the action dimension equal to the maximum number of sub-images, while masking those dimensions that exceed the actual number of sub-images in the current instance, as well as those actions that do not satisfy the required constraints~(\ref{slot-eq2}) --~(\ref{time con}) and ~(\ref{fb con}).

\textit{\textbf{User Actions}}: 
An action of each user includes the feedback selection and power allocation.
We define an action of the agent user $v$ at time slot $t$ as $\boldsymbol{a}_{\textrm{U}, v, t} = [\beta_{\textrm{D},v}(t), \beta_{\textrm{C},v}(t), p^{\textrm{U}}_v(t)]$.
Here, to guarantee that the selected actions meet the constraints~(\ref{slot-eq2}) --~(\ref{power con}), we use an action mask algorithm~\cite{Huang_2022} to remove all infeasible actions. 
The joint action vector of the BS and all users is $\boldsymbol{a}_t = [\boldsymbol{a}_{1,t}^{\textrm{B}}, \dots, \boldsymbol{a}_{V,t}^{\textrm{B}} \boldsymbol{a}_{1,t}^{\textrm{U}}, \dots, \boldsymbol{a}_{V,t}^{\textrm{U}}]$.

\textit{\textbf{Reward}}: 
% \nbl{Considering that directly using the negative weighted MSE as the reward will encourge the agents to transmit all the sub-images as soon as possible to minimize the MSE, which may violates the delay constraints~(\ref{rate con}) and~(\ref{fbt con}), we define the reward as the semantic-weighted MSE decrease rate.}
% The reward of choosing joint action $\boldsymbol{a}_t$ under a given state $\boldsymbol{o}_t$ is defined as the semantic-weighted MSE decrease rate. 
The proposed VDAC-DNC will maximize this reward, thus finding the optimal solution for the problem in~(\ref{obj}).
Therefore, the reward is
\begin{equation}
    r(\boldsymbol{o}_t,\boldsymbol{a}_t) = \sum_{v \in \mathcal{V}}D_{k,v}^{\textrm{B}}- D_{k,v}^{\textrm{A}}  - \kappa_{\textrm{R}} e_t^{\textrm{R}} - \kappa_{\textrm{F}} e_t^{\textrm{F}}, \label{team reward}
\end{equation}
where 
%the first term $D_{k,v}^{\textrm{B}}$ is the MSE of all sub-images before taking action $\boldsymbol{a}_t$, defined as
\begin{equation}
    D_{k,v}^{\textrm{B}}  =  \sum_{j \in \mathcal{M}_{k,v}} \omega_{j,v}(t) \| \boldsymbol{m}^j_{k,v} - \boldsymbol{s}_{k,v}^j(\boldsymbol{\alpha}_{kj,v}(t), \eta_{kj,v}(t)) \|_2^2, 
\end{equation}
is the semantic-weighted MSE of all sub-images before taking action $\boldsymbol{a}_t$.
The second term $D_{k,v}^{\textrm{A}}$ is the semantic-weighted MSE of all sub-images after taking action $\boldsymbol{a}_t$. The third term $e_t^{\textrm{R}} = \sum_{v \in \mathcal{V}} \sum_{n} \sum_{k,j} \mathbbm{1}{(\alpha^n_{kj,v}(t) \frac{Z(\hat{\boldsymbol{x}}_{k,v}^j)}{R_n(t)} > \bigtriangleup t)} $ is the number of sub-image transmissions whose delay exceeds the threshold. 
The final term $e_t^{\textrm{F}} = \sum_{v \in \mathcal{V}} \sum_{n} \sum_{k,j} \mathbbm{1}{((\beta_{\textrm{D},v}(t)+\beta_{\textrm{C},v}(t)) \tau_v > \bigtriangleup t_f)} $ is the number of feedback transmissions whose delay exceeds the threshold, $\kappa_{\textrm{R}}$ and $\kappa_{\textrm{F}}$ are weight parameters that control the weight penalty of sub-image transmission delay in~\eqref{rate con} and feedback transmission delay~\eqref{fbt con}, respectively. 
\nbl{Note that, we cannot directly use the negative weighted MSE as the reward since it will encourage the agents to transmit all the sub-images as soon as possible to minimize the MSE, which may violate the delay constraints~(\ref{rate con}) and~(\ref{fbt con})}
    
\textit{\textbf{BS Actor Networks}}: The actor network of BS $\boldsymbol{\pi}_{\boldsymbol{\phi}^{\textrm{B}}}(\hat{\boldsymbol{a}}_{v,t}^{\textrm{B}} | \boldsymbol{o}_{v,t}^{\textrm{B}})$ generates an action $\hat{\boldsymbol{a}}_{v,t}^{\textrm{B}}$ based on the current BS state $\boldsymbol{o}_{v,t}^{\textrm{B}}$.
Here, the action output by the actor network is different from the action defined in $\boldsymbol{a}_{v,t}^{\textrm{B}}$.
This is because the output of the actor network is a continuous action probability distribution and the action selected by the actor may not be in the discrete action space defined in $\mathcal{A}$.
For example, if we use one Gaussian distribution to approximate the continuous action probability distribution, the output of the actor network is a vector $\left[\mu, \sigma\right]$ where $\mu$ is the mean and $\sigma$ is the variance. Then, the action $\hat{\boldsymbol{a}}_{v,t}^{\textrm{B}}$ is sampled by this continuous Gaussian distribution (i.e., $\hat{\boldsymbol{a}}_{v,t}^{\textrm{B}} \sim \mathcal{N}(\mu, \operatorname{diag}(\sigma^2))$).
Hereinafter, we call the action selected by the actor as continuous action and the action defined in $\boldsymbol{a}_{v,t}^{\textrm{B}}$ as discrete action.
%定义continuous
%第一句话
%先介绍输出，
%定义清楚
%以自己的理解，写出来，不要引入过多的定义
% In particular, the actor network will first output a mean $\mu$ and a standard deviation $\sigma$ of a probability distribution. Then, the continuous action $\hat{\boldsymbol{a}}_t$ is sampled form this distribution. 
The continuous action $\hat{\boldsymbol{a}}_{v,t}^{\textrm{B}}$ will be used to find a discrete action $\boldsymbol{a}_{v,t}^{\textrm{B}}$ by using the DNC method, which will be introduced in Section~\ref{se:DNC}.
Note that $\hat{\boldsymbol{a}}_{v,t}^{\textrm{B}} \in \mathbb{R}^{N}$ is a continuous action that is used to construct the actual discrete action $\boldsymbol{a}_{v,t}^{\textrm{B}}$.

\textit{\textbf{User Actor Networks}}: The actor network of user $v$ $\boldsymbol{\pi}_{\boldsymbol{\phi}_{v}^{\textrm{U}}}(\boldsymbol{a}_{v,t}^{\textrm{U}} | \boldsymbol{o}_{v,t}^{\textrm{U}})$ generates an action $\boldsymbol{a}_{v,t}^{\textrm{U}}$ based on the current user $v$ state $\boldsymbol{o}_{v,t}^{\textrm{U}}$.
Since the DNC method can only process discrete action, as well as the action space of the user is small, we do not use the DNC method to process user actions. 

\textit{\textbf{The Critic Networks of the BS}}: 
The critic network of the BS is \st{a} three multi-layer perceptrons (MLPs) that approximate a local $Q$ function $Q_{\boldsymbol{\psi}^{\textrm{B}}}(\boldsymbol{o}_{v,t}^{\textrm{B}}, \boldsymbol{a}_{v,t}^{\textrm{B}})$ to find a discrete action $\boldsymbol{a}_{v,t}^{\textrm{B}}$ from the continuous action $\hat{\boldsymbol{a}}_{v,t}^{\textrm{B}}$ and then estimate the expected cumulative reward from the given state $\boldsymbol{o}_{v,t}^{\textrm{B}}$ and action $\boldsymbol{a}_{v,t}^{\textrm{B}}$, where $\boldsymbol{\psi}^{\textrm{B}}$ are the parameter vectors of the MLPs.

% and historical state-action pair $\boldsymbol{\chi}_{\textrm{B},t-1}$ at the previous time slot. 
\textit{\textbf{The Critic Networks of users}}:
The critic network of a user is also MLPs that approximates a local $Q$ function $Q_{\boldsymbol{\psi}_{v}^{\textrm{U}}}(\boldsymbol{o}_{v,t}^{\textrm{U}}, \boldsymbol{a}_{v,t}^{\textrm{U}})$ to estimate the expected cumulative reward from the given state $\boldsymbol{o}_{v,t}^{\textrm{U}}$ and action $\boldsymbol{a}_{v,t}^{\textrm{U}}$, where $\boldsymbol{\psi}_{v}^{\textrm{U}}$ are the parameter vectors of the MLPs.
%In particular, the local critic network for the BS and each user is approximated by with parameter vectors $\boldsymbol{\psi}^{\textrm{B}}$ and .

\textit{\textbf{Global Critic Networks}}: 
The global critic network $Q$ function $Q_{\textrm{G}}(\boldsymbol{o}_t, \boldsymbol{a}_t)$ is used to estimate the total rewards under a joint state $\boldsymbol{o}_t$ and a joint action $\boldsymbol{a}_t$ of all agents.
Here we use the critic network of the BS and users to approximate a global $Q$ function in two different methods.
First, we can summarize the $Q$ functions of all agents to approximate the global $Q$ function as done in VDN method~\cite{10.5555/3237383.3238080}:
\begin{equation}
    Q_{\textrm{G}}(\boldsymbol{o}_t, \boldsymbol{a}_t) = \sum_v Q_{\boldsymbol{\psi}^{\textrm{B}}}(\boldsymbol{o}_{v,t}^{\textrm{B}}, \boldsymbol{a}_{v,t}^{\textrm{B}}) + Q_{\boldsymbol{\psi}_{v}^{\textrm{U}}}(\boldsymbol{o}_{v,t}^{\textrm{U}}, \boldsymbol{a}_{v,t}^{\textrm{U}}). \label{sumgq}
\end{equation}
Second, we can use a MLP model with parameters $\boldsymbol{\Gamma}_{\boldsymbol{\Phi}}$ to aggregate local $Q$ functions of all agents as done in QMIX~\cite{JMLR:v21:20-081}. 
% The mixing network is realized with feedforward neural networks (FNNs) with parameters $\boldsymbol{\Phi}$, named hypernetwork. 
Hence, the global $Q$ function $Q_{\textrm{G}}(\boldsymbol{o}_t, \boldsymbol{a}_t)$ is 
\begin{equation}
    Q_{\textrm{G}}(\boldsymbol{o}_t, \boldsymbol{a}_t) = \boldsymbol{\Gamma}_{\boldsymbol{\Phi}}(Q_{\boldsymbol{\psi}^{\textrm{B}}}, Q_{\boldsymbol{\boldsymbol{\psi}_{1}^{\textrm{U}}}}, \dots, Q_{\boldsymbol{\boldsymbol{\psi}_{v}^{\textrm{U}}}}).  \label{qmix}
\end{equation}

\subsection{Training of the VDAC-DNC} \label{se:DNC}
Next, we introduce the entire training process of the proposed VDAC-DNC algorithm to solve problem~(\ref{obj}).
% Since the DNC method can only process discrete action, as well as the action space of the user is small, thus we do not apply the DNC method to process the user action. 
We first introduce the use of DNC method to select the BS discrete action $\boldsymbol{a}_{v,t}^{\textrm{B}}$ that can achieve the highest $Q$ value among the neighbors of the continuous action $\hat{\boldsymbol{a}}_{v,t}^{\textrm{B}}$ generated by the BS actor network.
Since the DNC method can only process discrete action, as well as the action space of the user is small, thus we do not apply the DNC method to process the user action. 
Then, we explain the loss function of the VDAC-DNC scheme and the training process.

%Since the action space is large, we use DNC, a continuous-to-discrete action mapping which requires no a priori definition of the action space.
% Since the discrete action space is large, to avoid full action enumeration, we can first learn a continuous policy and utilize DNC, a continuous-to-discrete action mapping~\cite{chandak2019learningactionrepresentationsreinforcement} to find a local higher-value action.
\begin{algorithm}[t]
	%\textsl{}\setstretch{1.8}
	\caption{Dynamic Neighborhood Construction}
    \small
	\renewcommand{\algorithmicrequire}{\textbf{Initialize:}}
	\renewcommand{\algorithmicensure}{\textbf{Input:}}
	\label{DNC}
	\begin{algorithmic}[1]
		% \ENSURE The channel state information $ \mathbf{h}_{A}^{k}(t),\mathbf{G}_U(t),\mathbf{h}_{I}^{k}(t),$ $\mathbf{h}_{E}^{k}(t),\mathbf{G}_E(t) $
		\REQUIRE $y$, $\epsilon$, $c_{\beta}$, $\bar{\boldsymbol{a}}_{v,t}^{\textrm{B}} \leftarrow \textrm{round}(\hat{\boldsymbol{a}}_{v,t}^{\textrm{B}}), \bar{\boldsymbol{a}}_{v,t}^{\textrm{B}\star} \leftarrow \bar{\boldsymbol{a}}_{t}^{\textrm{B}}, \mathcal{Y} = \emptyset$
		%\STATE  Calculate the order $\zeta_{i}$ according to $\mathbf{V}_{\text{init}}$
		\WHILE	{$y > 0$, or $\epsilon >0$}
        \STATE  Find neighbors $\boldsymbol{A}$ to $\bar{\boldsymbol{a}}_{v,t}^{\textrm{B}}$ with $P_{ij}$
        \STATE  Get Q-values for all neighbors in $\boldsymbol{A}$
        \STATE  Obtain set $\mathcal{Y}^{\prime}$ with $y$-best neighbors, $\mathcal{Y} \leftarrow \mathcal{Y} \cup \mathcal{Y}^{\prime}$
        \STATE  $\boldsymbol{y}_1 \leftarrow \mathcal{Y}^{\prime}$, with $\boldsymbol{y}_1 = \arg \max_{\boldsymbol{y} \in \mathcal{Y}^{\prime}} Q_{\boldsymbol{\psi}^{\textrm{B}}}(\boldsymbol{o}_{v,t}^{\textrm{B}}, \boldsymbol{y})$
        \IF {$Q_{\boldsymbol{\psi}^{\textrm{B}}}(\boldsymbol{o}_{v,t}^{\textrm{B}}, \boldsymbol{y}_1) > Q_{\boldsymbol{\psi}^{\textrm{B}}}(\boldsymbol{o}_{v,t}^{\textrm{B}}, \bar{\boldsymbol{a}}_{\textrm{B}, t}))$}
        \STATE Accept $\boldsymbol{y}_1 \in \mathcal{Y}^{\prime}, \bar{\boldsymbol{a}}_{v,t}^{\textrm{B}} \leftarrow \boldsymbol{y}_1$
        \IF{$Q_{\boldsymbol{\psi}^{\textrm{B}}}(\boldsymbol{o}_{v,t}^{\textrm{B}}, \boldsymbol{y}_1) > Q_{\boldsymbol{\psi}^{\textrm{B}}}(\boldsymbol{o}_{v,t}^{\textrm{B}}, \bar{\boldsymbol{a}}_{v,t}^{\textrm{B}\star})$ }
        \STATE $\bar{\boldsymbol{a}}_{v,t}^{\textrm{B}\star} \leftarrow \boldsymbol{y}_1$
        \ENDIF
        \ELSIF {rand() < $\exp [-(Q_{\boldsymbol{\psi}^{\textrm{B}}}(\boldsymbol{o}_{v,t}^{\textrm{B}}, \boldsymbol{y}_1)-Q_{\boldsymbol{\psi}^{\textrm{B}}}(\boldsymbol{o}_{v,t}^{\textrm{B}}, \boldsymbol{y}_1))/ \epsilon )]$}
        \STATE Accept $\boldsymbol{y}_1 \in \mathcal{Y}^{\prime}, \bar{\boldsymbol{a}}_{v,t}^{\textrm{B}} \leftarrow \boldsymbol{y}_1, \epsilon \leftarrow \epsilon - c_{\beta} $
        \ELSE 
        \STATE Reject $\boldsymbol{y}_1 \in \mathcal{Y}^{\prime}, \bar{\boldsymbol{a}}_{v,t}^{\textrm{B}} \leftarrow \boldsymbol{y}_{\textrm{rand}} \in \mathcal{Y} $
        \ENDIF
		\STATE $y \leftarrow \left \lceil y - c_k \right \rceil $
		%\STATE $\zeta_{\eta+1} \leftarrow \mathbf{V}_{\text{opt}}$
		\ENDWHILE
        \STATE Return $\bar{\boldsymbol{a}}_{v,t}^{\textrm{B}\star}$
	\end{algorithmic}
\end{algorithm}

\begin{algorithm}[t]
	%\textsl{}\setstretch{1.8}
	\caption{VDAC-DNC for semantic-weighted MSE minimization}
    \small
	\renewcommand{\algorithmicrequire}{\textbf{Initialize:}}
	\renewcommand{\algorithmicensure}{\textbf{Input:}}
	\label{VDAC}
	\begin{algorithmic}[1]
		% \ENSURE The channel state information $ \mathbf{h}_{A}^{k}(t),\mathbf{G}_U(t),\mathbf{h}_{I}^{k}(t),$ $\mathbf{h}_{E}^{k}(t),\mathbf{G}_E(t) $
		\REQUIRE Local $Q$ function $Q_{\boldsymbol{\psi}^{\textrm{B}}}$ of BS and $Q_{\boldsymbol{\psi}_{v}^{\textrm{U}}}$ of each user $v$, parameters $\boldsymbol{\Phi}$ of $\boldsymbol{\Gamma}_{\boldsymbol{\Phi}}$.
    \FOR{each episode}
        \FOR{each user $v$}
        \STATE BS observes the local state $\boldsymbol{o}_{v,t}^{\textrm{B}}$
        \STATE Generate a continuous action $\hat{\boldsymbol{a}}_{v,t}^{\textrm{B}} \leftarrow \boldsymbol{\pi}_{\boldsymbol{\phi}^{\textrm{B}}}(\boldsymbol{o}_{v,t}^{\textrm{B}})$ using the BS actor network.
        \STATE $\bar{\boldsymbol{a}}_{v,t}^{\textrm{B}} \leftarrow \hat{\boldsymbol{a}}_{v,t}^{\textrm{B}}, \bar{\boldsymbol{a}}_{v,t}^{\textrm{B}\star} \leftarrow \textrm{DNC}(\bar{\boldsymbol{a}}_{v,t}^{\textrm{B}})$
        \STATE Calculate local $Q_{\boldsymbol{\psi}^{\textrm{B}}}(\boldsymbol{o}_{v,t}^{\textrm{B}}, \boldsymbol{a}_{v,t}^{\textrm{B}})$
        \ENDFOR
        \FOR{each user $v = 1,2, \dots, V$}
        \STATE User $v$ observes the local state $\boldsymbol{o}_{v,t}^{\textrm{U}}$
        \STATE Select an action $\boldsymbol{a}_{\textrm{U}, v,t}$ using the user actor network.
        \STATE Calculate local $Q_{\boldsymbol{\psi}_{v}^{\textrm{U}}}(\boldsymbol{o}_{v,t}^{\textrm{U}}, \boldsymbol{a}_{v,t}^{\textrm{U}})$
        \ENDFOR
        \STATE Calculate the loss $L_c(\boldsymbol{\psi}_{v}^{\textrm{U}},\boldsymbol{\psi}^{\textrm{B}}, \boldsymbol{\Phi})$ 
        \STATE The BS updates its actor network using \eqref{eq:bs-actor-update-short} and updates its critic network using \eqref{bscri}
        \FOR{each user $v$}
        \STATE Update the local actor network using \eqref{eq:user-actor-update-short} and update the critic network using \eqref{usercri}
        \ENDFOR
        \STATE Update the hypernetwork using \eqref{hyper}
        \ENDFOR
	\end{algorithmic}
\end{algorithm}
\subsubsection{Discrete action selection} 
% To update the critic and actor networks, the actor generates a continuous action $\hat{\boldsymbol{a}}_t$ based on the current state and the actor $\boldsymbol{\pi}_{\boldsymbol{\phi}}$. Then
The procedure of using the DNC method to find a discrete action is summarized as follows:

\textit{\textbf{Step 1}}: The BS uses the continuous action $\hat{\boldsymbol{a}}_{v,t}^{\textrm{B}}$ to generate a discrete action $\bar{\boldsymbol{a}}_{v,t}^{\textrm{B}}$ by rounding each element in $\hat{\boldsymbol{a}}_{v,t}^{\textrm{B}}$ to its closest discrete value, which is expressed as 
\begin{equation}
\bar{\boldsymbol{a}}_{v,t}^{\textrm{B}} = \left \lfloor \hat{\boldsymbol{a}}_{v,t}^{\textrm{B}} + \frac{1}{2}  \right \rfloor ,
\end{equation}
where $\left \lfloor  \right \rfloor$ is the floor function. 

\textit{\textbf{Step 2}}: The BS searches the neighboring actions of $\bar{\boldsymbol{a}}_{v,t}^{\textrm{B}}$ to find a better discrete action. In particular, we first define a perturbation matrix $\boldsymbol{P} = (P_{ij})_{i=1,\ldots,U;\, j=1,\ldots,2lU} \in \mathbb{R}^{U \times 2 l U}$, where $U$ is the action dimension and $l$ is the range of neighboring actions we will search for each action.
Its first $lU$ columns represent the positive search of the discrete action $\bar{\boldsymbol{a}}_{v,t}^{\textrm{B}}$. In particular, for each column, one element of the discrete action will be increased by step size $\epsilon$.
The last $lU$ columns of the matrix represent negative movements of the discrete action $\bar{\boldsymbol{a}}_{v,t}^{\textrm{B}}$.
In particular, if $ l =1$,
the matrix $\boldsymbol{P}$ is
\begin{equation}
   \boldsymbol{P}=\left[\begin{array}{cccccccc}
\epsilon & 0 & \cdots & 0 & -\epsilon & 0 & \cdots & 0 \\
0 & \epsilon & \cdots & 0 & 0 & -\epsilon & \cdots & 0 \\
\vdots & \vdots & \ddots & \vdots & \vdots & \vdots & \ddots & \vdots \\
0 & 0 & \cdots & \epsilon & 0 & 0 & \cdots & -\epsilon
\end{array}\right] .
\end{equation}
 Let $\bar{\boldsymbol{A}} = [\bar{\boldsymbol{a}}_{v,t}^{\textrm{B}},\bar{\boldsymbol{a}}_{v,t}^{\textrm{B}}, \dots,\bar{\boldsymbol{a}}_{v,t}^{\textrm{B}}] \in \mathbb{Z}^{U \times 2lU}$ be a matrix that consists of $2lU$ action $\bar{\boldsymbol{a}}_{v,t}^{\textrm{B}}$. Then, $2lU$ neighboring actions of $\bar{\boldsymbol{a}}_{v,t}^{\textrm{B}}$ are represented by
\begin{equation}
\boldsymbol{A} = \bar{\boldsymbol{A}} + \boldsymbol{P} = [\bar{\boldsymbol{a}}_{v,t,1}^{\textrm{B}}, \bar{\boldsymbol{a}}_{v,t,2}^{\textrm{B}}, \dots, \bar{\boldsymbol{a}}_{v,t,2lU}^{\textrm{B}}]. \label{newmat}
\end{equation}
From~\eqref{newmat}, we see that each neighboring action in $\boldsymbol{A}$ only has one element that is different from $\bar{\boldsymbol{a}}_{v,t}^{\textrm{B}}$, which guarantees that the action search will only move in one direction per step.

\textit{\textbf{Step 3}}: We utilize the critic to obtain $Q$-values for neighboring actions in $\boldsymbol{A}$. 
Let $\mathcal{Y}$ be the initial overall action set.
Among these neighbors, we select the $y$ actions with the largest $Q$-values and sort them in descending order, forming an ordered action subset $\mathcal{Y}^{\prime}$.
We then add $\mathcal{Y}^{\prime}$ to $\mathcal{Y}$.
% Then we store the top $y$ neighboring actions in an ordered set $\mathcal{Y}^{\prime}$ and store $\mathcal{Y}^{\prime}$ in $\mathcal{Y}$.
From $\mathcal{Y}^{\prime}$, we select action $\boldsymbol{y}_1$, which has the highest $Q$-value in $\mathcal{Y}^{\prime}$.

\textit{\textbf{Step 4}}: We use an iterative simulated annealing (SA)-based~\cite{KochenderferWheeler2019} search scheme to explore the neighborhoods in $\boldsymbol{A}$ and find the action $\boldsymbol{a}_{v,t}^{\textrm{B}\star}$ that can achieve the highest $Q$ value. SA is a probabilistic search method which performs worse actions instead of performing the best actions in a specific probability.
In particular, throughout the search process, we reduce $\epsilon$ by adjusting the cooling parameter $c_{\beta}$, thus escaping local optima and becoming more robust than greedy search.
In particular, if the $Q$-value of $\boldsymbol{y}_1 \in \mathcal{Y}^{\prime}$ exceeds the $Q$-value of the base action $\bar{\boldsymbol{a}}_{v,t}^{\textrm{B}}$, we accept $\boldsymbol{y}_1$ as new discrete action $\bar{\boldsymbol{a}}_{v,t}^{\textrm{B}}$. If not, we still accept it with probability $\textrm{exp}[-(Q_{\boldsymbol{\psi}_{B}}(\boldsymbol{o}_{v,t}^{\textrm{B}},\bar{\boldsymbol{a}}_{v,t}^{\textrm{B}})- Q_{\boldsymbol{\psi}_{B}}(\boldsymbol{o}_{v,t}^{\textrm{B}},\boldsymbol{y}_{1})) / \epsilon]$.

Steps 2-4 are repeated until the algorithm reaches the predefined number of iterations.
Finally, we obtain the action $\bar{\boldsymbol{a}}^{\textrm{B}\star}_{v,t}$ with the highest $Q$-value.
% The process is repeated until the parameter of temperature is less than 0, then we obtain an action  and set $\boldsymbol{a}_t = \bar{\boldsymbol{a}}^{\star}_t$.
Then the BS combines actions $\bar{\boldsymbol{a}}^{\textrm{B}\star}_{v,t}$ and uses action $\bar{\boldsymbol{a}}^{\star}_{\textrm{B},t}$ to interact with the user.
% and collects a transition $\{(\boldsymbol{o}_t, \boldsymbol{a}_t, r_t(\boldsymbol{s}_t, \boldsymbol{a}_t), \boldsymbol{s}_{t+1})\}$ and stores it in a replay buffer for RL model training.
The specific process of DNC algorithm is summarized in Algorithm~\ref{DNC}.
\subsubsection{Critic network and actor network update}
Next, we introduce the update of the local critic, the actor networks, and the MLP used to aggregate local $Q$ functions.
% The objective training AC-DNC is to maximize the expected cumulative reward, defined as 
% \begin{equation}
%     \chi(\phi) = \mathbb{E}_{\boldsymbol{\pi}_{\phi}}[\sum_{t=1}^T]
% \end{equation}
\begin{itemize}
    \item \textit{Local Critic Network Update}: \nbl{The local critic network at a user or the BS is updated by minimizing the temporal difference (TD) error, which is defined as}
    \begin{align}
        &L_c(\boldsymbol{\psi}_{v}^{\textrm{U}},\boldsymbol{\psi}^{\textrm{B}}, \boldsymbol{\Phi})  = \notag \\ &\mathbb{E}[r_t(\boldsymbol{o}_t, \boldsymbol{a}_t) + \gamma Q_{\textrm{G}}(\boldsymbol{o}_{t+1}, \boldsymbol{a}_{t+1} )   - Q_{\textrm{G}}(\boldsymbol{o}_t, \boldsymbol{a}_t)]. \label{critic}
    \end{align}
    Given~(\ref{critic}), the parameters of local critic network is updated by a stochastic gradient descent (SGD) method as follows:
    % \begin{equation}
    % \end{equation}
    \begin{align}
        \boldsymbol{\psi}^{\textrm{B}} & \xleftarrow{} \boldsymbol{\psi}^{\textrm{B}} - \lambda_{c} \nabla_{\boldsymbol{\psi}^{\textrm{B}}} \frac{1}{2}L_c^2(\boldsymbol{\psi}_{v}^{\textrm{U}},\boldsymbol{\psi}^{\textrm{B}}, \boldsymbol{\Phi}), \label{bscri} \\
        \boldsymbol{\psi}_{v}^{\textrm{U}} & \xleftarrow{} \boldsymbol{\psi}_{v}^{\textrm{U}} - \lambda_{c} \nabla_{\boldsymbol{\psi}_{v}^{\textrm{U}}} \frac{1}{2}L_c^2(\boldsymbol{\psi}_{v}^{\textrm{U}},\boldsymbol{\psi}^{\textrm{B}}, \boldsymbol{\Phi}), \label{usercri}
    \end{align}
    where $\lambda_{c}$ is the learning rate, $\nabla_{\boldsymbol{\psi}^{\textrm{B}}} L_c(\boldsymbol{\psi}^{\textrm{B}})$ and $\nabla_{\boldsymbol{\psi}_{v}^{\textrm{U}}} L_c(\boldsymbol{\psi}_{v}^{\textrm{U}})$ are the gradients of the critic networks of the BS and users.
    \item \textit{Actor Update}: The actor network parameters of both BS and the users are updated using policy gradient ascent, as follows:
    % \begin{equation}
    %     \boldsymbol{\phi}^{\textrm{B}} \xleftarrow{} \boldsymbol{\phi}^{\textrm{B}} + \lambda_{a} L_c \sum_{v \in \mathcal{V}} \nabla_{\boldsymbol{\phi}^{\textrm{B}}} \text{log}  \boldsymbol{\pi}_{\boldsymbol{\phi}^{\textrm{B}}}(\hat{\boldsymbol{a}}_{v,t}^{\textrm{B}} | \boldsymbol{o}_{v,t}^{\textrm{B}}), \label{bsact}
    % \end{equation}
\begin{align}
  \boldsymbol{\phi}^{\textrm{B}}
    &\leftarrow
      \boldsymbol{\phi}^{\textrm{B}}
      + \lambda_a \,L_c
        \sum_{v\in\mathcal V}
        \nabla_{\boldsymbol{\phi}^{\textrm{B}}}
        \log
        \pi_{\boldsymbol{\phi}^{\textrm{B}}}^{\textrm{B}}
          (\hat{\boldsymbol a}^{\textrm{B}}_{v,t}
           |\boldsymbol o^{\textrm{B}}_{v,t}),
  \label{eq:bs-actor-update-short} \\
  \boldsymbol{\phi}_v^{\textrm{U}}
    &\leftarrow
      \boldsymbol{\phi}_v^{\textrm{U}}
      + \lambda_a \,L_c\,
        \nabla_{\boldsymbol{\phi}_v^{\textrm{U}}}
        \log
        \pi_{\boldsymbol{\phi}_v^{\textrm{U}}}^{\textrm{U}}
          (\boldsymbol a^{\textrm{U}}_{v,t}
           |\boldsymbol o^{\textrm{U}}_{v,t}),
  \label{eq:user-actor-update-short}
\end{align}
    where $\lambda_{a}$ is the learning rate, and $\nabla_{\boldsymbol{\phi}^{\textrm{B}}} \text{log} \boldsymbol{\pi}_{\boldsymbol{\phi}^{\textrm{B}}}(\hat{\boldsymbol{a}}_t |\boldsymbol{o}_t)$ and $\nabla_{\boldsymbol{\phi}_v^{\textrm{U}}}
        \log
        \pi_{\boldsymbol{\phi}_v^{\textrm{U}}}^{\textrm{U}}$ are the gradients of the actor networks of the BS and users.
    % Note that using off-policy information in the actor weight update does not impact on learning stability.
    %The user actor network updates its policy using policy gradient ascent
    % \begin{equation}
    %     \boldsymbol{\phi}_{v}^{\textrm{U}} \xleftarrow{} \boldsymbol{\phi}_{v}^{\textrm{U}} + \lambda_{a}L_c \sum_{v \in \mathcal{V}}  \nabla_{\boldsymbol{\phi}_{v}^{\textrm{U}}} \text{log} \boldsymbol{\pi}_{\boldsymbol{\phi}_{v}^{\textrm{U}}}(\boldsymbol{a}_{v,t}^{\textrm{U}} | \boldsymbol{o}_{v,t}^{\textrm{U}}), \label{useract}
    % \end{equation}
    \item \textit{MLP Update}: The parameters of the MLP used to approximate the global $Q$ function is also updated using gradient ascent, as follows:
    \begin{equation}
        \boldsymbol{\Phi} \leftarrow \boldsymbol{\Phi} - \lambda_h \nabla_{\boldsymbol{\Phi}}L_c(\boldsymbol{\psi}_{v}^{\textrm{U}},\boldsymbol{\psi}^{\textrm{B}}, \boldsymbol{\Phi}), \label{hyper}
    \end{equation}
    where $\lambda_h$ is the learning rate.
\end{itemize}

%The specific procedure of the 

\section{Convergence, Implementation, and Complexity Analysis}\label{se:analy}
Next, we analyze the convergence, implementation, communication overhead, and complexity of the proposed VDAC-DNC method.
\subsection{Convergence Analysis}

%We now outline the convergence of the proposed VDAC-DNC algorithm.
To analyze the convergence of the proposed VDAC-DNC algorithm, we first define $\boldsymbol{\theta} =\big(\boldsymbol{\phi}^{\textrm{B}},\{\boldsymbol{\phi}_v^{\textrm{U}}\}_{v\in\mathcal V}\big)$
as the parameters of the BS and users.
%, and
%$\eta =\big(\boldsymbol{\psi}^{\textrm{B}},\{\boldsymbol{\psi}_v^{\textrm{U}}\}_{v\in\mathcal V},\boldsymbol{\Phi}\big)$ be the vector of the parameters of all critics and the MLP.
The joint policy factorizes as
\begin{equation}
  \pi_{\boldsymbol{\theta}}(\boldsymbol a_t|\boldsymbol o_t)
  = \prod_{v\in\mathcal V}
      \pi_{\boldsymbol{\phi}^{\textrm{B}}}^{\textrm{B}}
        (\boldsymbol a^{\textrm{B}}_{v,t}|\boldsymbol o^{\textrm{B}}_{v,t})
      \pi_{\boldsymbol{\phi}_v^{\textrm{U}}}^{\textrm{U}}
        (\boldsymbol a^{\textrm{U}}_{v,t}|\boldsymbol o^{\textrm{U}}_{v,t}),
  \label{eq:joint-policy}
\end{equation}
% where the BS actions $\boldsymbol a^{\textrm{B}}_{v,t}$ are obtained
% from continuous proto-actions
% $\hat{\boldsymbol a}^{\textrm{B}}_{v,t}
%  \sim \pi_{\boldsymbol{\phi}^{\textrm{B}}}^{\textrm{B}}
%        (\cdot|\boldsymbol o^{\textrm{B}}_{v,t})$
% via the deterministic DNC mapping.
% Since this mapping does not depend on~$\boldsymbol{\theta}$, it
% can be considered into the environmental dynamics and does not change the
% form of the policy gradient.
The global action-value function under policy $\pi_{\boldsymbol{\theta}}$ is denoted by
$Q^{\pi_{\boldsymbol{\theta}}}(\boldsymbol o,\boldsymbol a)$ and is approximated by the
value-decomposition critic $Q_\textrm{G}(\boldsymbol{o}_t, \boldsymbol{a}_t)$.
%\paragraph*{Global critic and TD error}
% \begin{equation}
%   Q_{\mathrm G}(\boldsymbol o_t,\boldsymbol a_t;\eta)
%   = \boldsymbol\Gamma_{\boldsymbol\Phi}\big(
%       Q_{\boldsymbol{\psi}^{\textrm{B}}},
%       Q_{\boldsymbol{\psi}_1^{\textrm{U}}},
%       \ldots,
%       Q_{\boldsymbol{\psi}_V^{\textrm{U}}}
%     \big),
% \end{equation}
% The one-step temporal-difference (TD) error is
% \begin{equation}
%   \delta_t
%   = r_t(\boldsymbol o_t,\boldsymbol a_t)
%     + \gamma Q_{\mathrm G}(\boldsymbol o_{t+1},\boldsymbol a_{t+1};\eta)
%     -       Q_{\mathrm G}(\boldsymbol o_t,\boldsymbol a_t;\eta),
%   \label{eq:td-error}
% \end{equation}
% and the critic parameters $\eta$ are trained by minimizing the squared
% TD loss
% \begin{equation}
%   L_c(\eta) = \mathbb{E}[\delta_t^2],
% \end{equation}
% leading to the stochastic gradient-descent updates in
% \eqref{bscri}--\eqref{hyper}.
%\paragraph*{Policy-gradient expression}
We consider the discounted return objective
\begin{equation}
    J(\boldsymbol{\theta})
  = \mathbb{E}_{\pi_{\boldsymbol{\theta}}}\Big[\sum_{t=0}^{\infty} \gamma^t r(\boldsymbol{o}_t,\boldsymbol{a}_t)\Big], \label{return}
\end{equation}
where  $\gamma$ is the discount factor that weights future rewards. Then, the convergence of the designed method is analyzed in the following lemma.
% Let $V_{\mathrm{tot}}(\boldsymbol o)$ be a state-value baseline (e.g.,
% $V_{\text {tot}}(\boldsymbol{o}) \triangleq \mathbb{E}_{\boldsymbol{a} \sim \pi(\cdot \mid \boldsymbol{o})}\left[Q^{\pi}(\boldsymbol{s}, \boldsymbol{a})\right]$), and define the global
% advantage
% \begin{equation}
%      A(\boldsymbol o,\boldsymbol a)
%   = Q^{\pi_{\boldsymbol{\theta}}}(\boldsymbol o,\boldsymbol a)
%     - V_{\mathrm{tot}}(\boldsymbol o).
% \end{equation}

\begin{lemma} \label{lemma1}
% Under the standard assumptions for actor-critic methods
% (e.g., \cite{sutton1999policy,konda1999actor}), the updates implemented in
% Algorithm~\ref{VDAC} converge to a stationary point of $J(\theta)$, i.e.,
% \begin{equation}
%     \liminf_{k\to\infty} \|\nabla_\theta J(\theta_k)\| = 0
% \end{equation}
%
If (i) $\mathbb{E}[Q_{\textrm{G}}(\boldsymbol{o}_t, \boldsymbol{a}_t)] = Q^{\pi_{\boldsymbol{\theta}}}(\boldsymbol o,\boldsymbol a)$, (ii) learning rate sequences $\{\lambda_{a,k}\}$ and $\{\lambda_{c,k}\}$ satisfy the Robbins--Monro conditions:
    $\sum_{k}\lambda_{a,k}=\sum_{k}\lambda_{c,k}=\infty,
      \sum_{k}\lambda_{a,k}^2<\infty,
      \sum_{k}\lambda_{c,k}^2<\infty,
      \lambda_{a,k}/{\lambda_{c,k}}\rightarrow 0$, and (iii) the update of the proposed MARL can be considered as the update of a single-agent actor–critic policy-gradient update, our proposed VDAC-DNC algorithm can converge to a locally optimal policy~\cite{Foerste2018}. 
\end{lemma}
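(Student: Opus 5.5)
Assumption (iii) lets us treat the BS and all users as one single agent with parameter $\boldsymbol{\theta}$ and the factorized policy \eqref{eq:joint-policy}. The proof then reduces to the standard two-timescale actor--critic convergence argument (Konda--Tsitsiklis / Borkar stochastic approximation). This is the route behind the convergence result cited in the statement~\cite{Foerste2018}.

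\textbf{Step 1: the DNC step.} The continuous proto-action $\hat{\boldsymbol a}^{\textrm{B}}_{v,t}$ is sampled from $\pi^{\textrm{B}}_{\boldsymbol\phi^{\textrm{B}}}$. The DNC search (Algorithm~\ref{DNC}) then maps it to a discrete action using only the critic, not $\boldsymbol\theta$. We therefore absorb this mapping into the environment's transition and reward kernel, viewed from the actor's side. The updates \eqref{eq:bs-actor-update-short}--\eqref{eq:user-actor-update-short} differentiate $\log\pi$ only at $\hat{\boldsymbol a}$, so the policy-gradient theorem applies unchanged. It gives
\begin{equation*}
\nabla_{\boldsymbol\theta}J(\boldsymbol\theta)=\mathbb{E}_{\pi_{\boldsymbol\theta}}\Big[\sum_{v}\nabla_{\boldsymbol\theta}\log\pi_{\boldsymbol\theta}(\boldsymbol a_t|\boldsymbol o_t)\,Q^{\pi_{\boldsymbol\theta}}(\boldsymbol o_t,\boldsymbol a_t)\Big],
\end{equation*}
where the log of the product policy splits into the sum of the BS and user score functions.

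\textbf{Step 2: the critic on the fast timescale.} By (ii), $\lambda_{a,k}/\lambda_{c,k}\to 0$, so the critic sees a quasi-static $\boldsymbol\theta$. The critic recursion \eqref{bscri}--\eqref{hyper} is then a stochastic approximation whose ODE tracks the fixed point of the TD error \eqref{critic}. Assumption (i) identifies that fixed point with $Q^{\pi_{\boldsymbol\theta}}$, so the critic error vanishes asymptotically.

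\textbf{Step 3: the actor on the slow timescale.} With $L_c$ acting as the TD error, it is an unbiased estimate of the advantage given the converged critic. The actor recursion is therefore a noisy discretization of the ODE $\dot{\boldsymbol\theta}=\nabla_{\boldsymbol\theta}J(\boldsymbol\theta)$ plus martingale noise. The Robbins--Monro conditions $\sum\lambda_{a,k}=\infty$ and $\sum\lambda_{a,k}^2<\infty$, together with bounded iterates and bounded rewards, let us invoke the Kushner--Clark / Borkar lemma. This gives convergence to the set $\{\boldsymbol\theta:\nabla J(\boldsymbol\theta)=0\}$. Since $J$ is a Lyapunov function for the gradient flow, the stable limit points are local maxima, i.e., locally optimal policies.

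\textbf{Main obstacle.} The hardest step is justifying Step 1 and assumption (i) together. The DNC mapping depends on the critic parameters, which keep changing, so the "environment" seen by the actor is nonstationary. I would handle this by noting that on the slow timescale the critic has already converged. The DNC mapping is then effectively fixed, so it acts as a stationary kernel in the limiting ODE. I would also need boundedness of the iterates, which I would assume or enforce via projection, as is standard.
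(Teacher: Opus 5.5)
The step that fails is in your Step 3: the claim that ``$L_c$, acting as the TD error, is an unbiased estimate of the advantage given the converged critic.'' That property holds for the state-value TD error $r_t+\gamma V(\boldsymbol o_{t+1})-V(\boldsymbol o_t)$. The TD error in \eqref{critic} is instead built from the action-value critic, $\delta_t=r_t+\gamma Q_{\textrm{G}}(\boldsymbol o_{t+1},\boldsymbol a_{t+1})-Q_{\textrm{G}}(\boldsymbol o_t,\boldsymbol a_t)$ with $\boldsymbol a_{t+1}\sim\pi_{\boldsymbol\theta}$. Under your own hypothesis (i), the Bellman equation for $Q^{\pi_{\boldsymbol\theta}}$ gives
\begin{equation*}
\mathbb{E}\big[\delta_t \,\big|\, \boldsymbol o_t,\boldsymbol a_t\big]
=\mathbb{E}\big[r_t+\gamma\, Q^{\pi_{\boldsymbol\theta}}(\boldsymbol o_{t+1},\boldsymbol a_{t+1}) \,\big|\, \boldsymbol o_t,\boldsymbol a_t\big]-Q^{\pi_{\boldsymbol\theta}}(\boldsymbol o_t,\boldsymbol a_t)=0 .
\end{equation*}
So the mean drift of the actor recursions \eqref{eq:bs-actor-update-short}--\eqref{eq:user-actor-update-short} is $\mathbb{E}[\nabla_{\boldsymbol\theta}\log\pi_{\boldsymbol\theta}(\boldsymbol a_t|\boldsymbol o_t)\,(Q^{\pi_{\boldsymbol\theta}}-Q_{\textrm{G}})(\boldsymbol o_t,\boldsymbol a_t)]$. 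This is zero under (i) and is otherwise just the critic's error; it is not $\nabla_{\boldsymbol\theta}J$. The limiting ODE is therefore not the gradient flow of $J$, and the Kushner--Clark/Borkar lemma tells you nothing about stationary points of $J$.

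You cannot extract gradient tracking from the update rule as written. You have to use (iii) as the hypothesis it is: the actor step is a single-agent policy-gradient step whose multiplier is the critic value $Q_{\textrm{G}}(\boldsymbol o_t,\boldsymbol a_t)$, possibly minus a baseline. Then (i), the policy-gradient theorem and the factorization \eqref{eq:joint-policy} make the expected step equal to $\nabla_{\boldsymbol\theta}J$, as in \eqref{eq:ori}--\eqref{eq:finalgd}, and the conclusion follows from the cited single-agent result under (ii). (In your Step~1 display, drop the outer $\sum_v$, which multiplies the gradient by $V$.)

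This is the paper's route. It verifies (i)--(iii), exhibiting $\lambda_{a,k}=a/(k+1)^{\alpha}$ and $\lambda_{c,k}=c/(k+1)^{\beta}$ with $\tfrac12<\beta<\alpha\le 1$ for (ii), and deriving \eqref{eq:finalgd} for (iii). It then invokes the cited result and does no two-timescale analysis of the TD error.

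If you keep your fast-timescale reasoning, it has two further holes:
\begin{itemize}
\item Borkar's theorem needs, for each frozen $\boldsymbol\theta$, a unique globally asymptotically stable critic equilibrium $\boldsymbol\psi^\star(\boldsymbol\theta)$ that is Lipschitz in $\boldsymbol\theta$. Hypothesis (i), an unbiasedness statement, does not supply this for TD-trained MLP/QMIX critics.
\item The DNC map executed after the critic equilibrates is $g_{\boldsymbol\psi^\star(\boldsymbol\theta)}$. It still moves with $\boldsymbol\theta$ and is discontinuous (an argmax over neighbours plus random SA acceptance). So it is not a $\boldsymbol\theta$-independent kernel, the score term at $\hat{\boldsymbol a}^{\textrm{B}}_{v,t}$ is only a partial gradient of the composite objective, and the drift is not Lipschitz.
\end{itemize}
The paper handles DNC only by remarking that it selects from the original action space. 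In a rigorous version, this dependence likewise has to be absorbed into hypothesis (iii) rather than argued away.
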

%proof：为什么满足这两个条件
%考虑DNC对收敛的影响
\begin{proof}
To prove the convergence of the proposed VDAC-DNC algorithm, we need to prove that the proposed VDAC-DNC algorithm satisfies conditions (i), (ii), and (iii).
First, the error introduced by the $Q$ function approximation
decreases as the neural networks are trained over time. Meanwhile, the value decomposition method is used to generate the global $Q$ value, ensuring that the difference between the actual
global $Q$-function and its approximated value remains bounded. 
Thus, the proposed VDAC-DNC algorithm satisfies condition (i) when the algorithm converges.
Note that the DNC module will not affect the convergence of the designed method since the DNC aims to find a better action from the original action space instead of creating new actions.   
%converts the continuous action of the BS into the discrete action that is actually executed.
%can be regarded as part of the environment dynamics since it maps the continuous action of the BS to the discrete action.
%During training, the environment trajectories are generated after this DNC conversion and used to train the global $Q_\textrm{G}$.
%affect the global $Q^{\pi_{\boldsymbol{\theta}}}$.
% Hence, the state transitions and rewards observed during training already reflect this combined dynamics.
% Since the global critic $Q_{\textrm G}$ is trained using TD targets computed from trajectories generated under the same combined dynamics, it is driven to satisfy the Bellman consistency associated with $\pi_{\boldsymbol{\theta}}$ in this modified environment.
%However, the critic $Q_\textrm{G}$ is trained exactly on trajectories generated under this combined dynamics. 
%Since the global $Q_\textrm{G}$ is trained on these
%Therefore, $Q_\textrm{G}$ is still approximating the correct $Q^{\pi_{\boldsymbol{\theta}}}$ for the policy together with the DNC module, and the condition (i) continues to hold.

Next, we prove that the proposed method meets condition~(ii). 
Here, the designed method can meet condition~(ii) via adjusting the value of the learning rate $\lambda_{a_k}$ and $\lambda_{c_k}$.
%since the learning rate sequences are designed parameters and can be constructed.
For example, let $\lambda_{a_k} = a / (k+1)^{\alpha}$ and $\lambda_{c_k} = c / (k+1)^{\beta}$ with $1/2 < \beta < \alpha \leq 1$.
% Using the $p$-series test, $\sum_{k\ge 0}(k+1)^{-p}$ diverges if $p\le 1$ and
% converges if $p>1$.
Then, the first and second conditions in (ii) can be satisfied as
% \begin{align}
%     &  \\
% &\sum_{k=0}^\infty \lambda_{c,k}
% = c\sum_{k=0}^\infty (k+1)^{-\beta}=\infty \iff \beta\le 1
% \end{align}
\begin{equation}
    \sum_{k=0}^\infty \lambda_{a,k}
= a\sum_{k=0}^\infty (k+1)^{-\alpha}=\infty, \quad \text{when} \quad \alpha\le 1,
\end{equation}
%Similarly, the second condition in (ii) can be proved as:
\begin{equation}
    \sum_{k=0}^\infty \lambda_{a,k}^2
= a^2\sum_{k=0}^\infty (k+1)^{-2\alpha}<\infty, \quad \text{when} \quad \alpha>\tfrac12,
\end{equation}
With regards to the third condition in (ii), since $\frac{1}{2}<\beta\leq1$, we have
\begin{equation}
    \frac{\lambda_{a,k}}{\lambda_{c,k}}
=\frac{a}{c}(k+1)^{\beta-\alpha}
\rightarrow 0, \quad \text{when} \quad \alpha>\beta.
\end{equation}
Thus, the proposed VDAC-DNC algorithm can satisfy condition~(ii) by setting specific learning rates for the actor and critic networks.

To prove that the proposed method meets condition (iii), we need to rewrite the expected gradient as a standard single-agent actor-critic policy gradient.
Given~\eqref{return} and the policy-gradient theorem for discounted Markov decision processes (MDPs)
\cite{sutton1999policy,10993496}, we have
\begin{equation}
      \nabla_{\boldsymbol{\theta}} J(\boldsymbol{\theta})
  = \mathbb{E}_{\pi_{\boldsymbol{\theta}}}\Big[
      \sum_{i\in\{\textrm{B}\}\cup\mathcal V}
      \nabla_{\boldsymbol{\phi}_i}
      \log \pi_{\boldsymbol{\phi}_i}(\boldsymbol a_{i,t}|\boldsymbol o_{i,t})\,
      Q^{\pi_{\boldsymbol{\theta}}}(\boldsymbol o_t,\boldsymbol a_t) 
    \Big], \label{eq:ori}
  %\mathbb{E}_{\pi_{\boldsymbol{\theta}}}\big[\sum_{i\in\{\textrm{B}\}\cup\mathcal V}
    %   \nabla_{\boldsymbol{\theta}} \log \pi_{\boldsymbol{\theta}}(\boldsymbol a_t|\boldsymbol o_t)\,
    %   Q^{\pi_{\boldsymbol{\theta}}}(\boldsymbol o_t,\boldsymbol a_t)
    % \big]. 
\end{equation}
where $\boldsymbol{\phi}_i$ denotes the local actor parameters of agent $i$
(BS or user).
Then, the expected gradient of $J(\boldsymbol{\theta})$ can be written as
% \begin{align}
%     \nabla_{\boldsymbol{\theta}} J(\boldsymbol{\theta})
%   & = \mathbb{E}_{\pi_{\boldsymbol{\theta}}}\Big[
%       \sum_{i\in\{\textrm{B}\}\cup\mathcal V}
%       \nabla_{\boldsymbol{\varphi}_i}
%       \log \pi_{\boldsymbol{\varphi}_i}(\boldsymbol a_{i,t}|\boldsymbol o_{i,t})\,
%       Q^{\pi_{\boldsymbol{\theta}}}(\boldsymbol o_t,\boldsymbol a_t)
%     \Big] \notag \\
%     & =\mathbb{E}_{\pi_{\boldsymbol{\theta}}}\Big[
%       \nabla_{\boldsymbol{\varphi}_i}
%       \log \prod_{i\in\{\textrm{B}\}\cup\mathcal V}\pi_{\boldsymbol{\varphi}_i}(\boldsymbol a_{i,t}|\boldsymbol o_{i,t})\,
%       Q^{\pi_{\boldsymbol{\theta}}}(\boldsymbol o_t,\boldsymbol a_t)
%     \Big].
%   \label{eq:pg-q-short}
% \end{align}
\begin{equation}
    \nabla_{\boldsymbol{\theta}} J(\boldsymbol{\theta}) =\mathbb{E}_{\pi_{\boldsymbol{\theta}}}\Big[
      \nabla_{\boldsymbol{\phi}_i}
      \log \prod_{i\in\{\textrm{B}\}\cup\mathcal V}\pi_{\boldsymbol{\phi}_i}(\boldsymbol a_{i,t}|\boldsymbol o_{i,t})\,
      Q^{\pi_{\boldsymbol{\theta}}}(\boldsymbol o_t,\boldsymbol a_t)
    \Big]. \label{eq:grident}
\end{equation}
By using the factorization \eqref{eq:joint-policy}, the~\eqref{eq:grident} can be written as
\begin{equation}
    \nabla_{\boldsymbol{\theta}} J({\boldsymbol{\theta}}) = \mathbb{E}_{\pi_{\boldsymbol{\theta}}}\Big[
      \nabla_{\boldsymbol{\theta}} \log \pi_{\boldsymbol{\theta}}(\boldsymbol a_t|\boldsymbol o_t)\,
      Q^{\pi_{\boldsymbol{\theta}}}(\boldsymbol o_t,\boldsymbol a_t)
    \Big]. \label{eq:finalgd}
\end{equation}
% where $\boldsymbol{\pi}(\boldsymbol{a}_t|\boldsymbol{o}_t) = \prod_{i\in\{\textrm{B}\}\cup\mathcal V}\pi_{\boldsymbol{\phi}_i}(\boldsymbol a_{i,t}|\boldsymbol o_{i,t})$.
From~\eqref{eq:finalgd}, we see that the expected gradient can be considered as the update of a single-agent actor-critic policy. Consequently, the proposed VDAC-DNC algorithm satisfies condition (iii).
%Note that the DNC module does not change the policy gradient structure thus the condition (ii) continues to hold.
% Finally, the proposed DNC module is applied as a processing operator on the BS's continuous action before interacting with the environment. From the perspective of the actor–critic update, the DNC module can be regarded as part of the environment dynamics since it only affect the global $Q^{\pi_{\boldsymbol{\theta}}}$ and does not change the policy gradient structure.
% %Therefore, all assumptions in our convergence analysis remain valid for the proposed VDAC-DNC algorithm.
This completes the proof. Our simulation result in Section~\ref{se:simulation}
also demonstrates that the proposed VDAC-DNC algorithm
effectively converges.
\end{proof}
% From Lemma~\ref{lemma1}, we see that the proposed VDAC-DNC algorithm is guaranteed to converge to a locally optimal solutions of problem~\ref{obj}.
% Our simulation result in Section~\ref{se:simulation} also demonstrates that the
% proposed VDAC-DNC algorithm effectively converges.
%convergence of the proposed VDAC-DNC algorithm depends on the gap between the actual global $Q$ function and the global critic network $Q$ function and the form of expected gradient. 
%In our algorithm, all conditions are satisfied.

%Given~\eqref{eq:pg-q-short}, 
%Then, 
% According to Lemma~\ref{lemma1}, an actor-critic that follows this gradient converges to a local maximum of the expected return, subject to assumptions in~\cite{sutton1999policy}.
%Thus, the proposed VDAC-DNC algorithm converges to a locally optimal policy.

 %\end{theorem}

\subsection{Implementation Analysis}
The implementation of the VDAC-DNC scheme for image semantic transmission consists of an offline training stage and an online implementation stage.
In the offline training stage, the BS first allocates the sub-channels for each user using the designed DNC method as shown in Algorithm~\ref{DNC} and sends the sub-images to users.
Then it requires the 1) sub-image transmission status of each user, 2) CSI and the transmission rate of each channel to calculate local $Q_{\boldsymbol{\psi}^{\textrm{B}}}(\boldsymbol{o}_{v,t}^{\textrm{B}}, \boldsymbol{a}_{v,t}^{\textrm{B}})$ for training its critic network.
% to train the critic network of the BS, the BS requires the 1) sub-image transmission status of each user, 2) CSI and the transmission rate of each channel.
% Given this information, the BS allocates the sub-channels using the designed DNC method as shown in Algorithm~\ref{DNC}.
Once users receive the sub-images sent by the BS, each user requires 1) the current CSI, 2) the received signal, and 3) the current remaining power budget to calculate the local $Q$ values for training its critic network.
Then, each user transmits the local $Q_{\boldsymbol{\psi}_{v}^{\textrm{U}}}(\boldsymbol{o}_{v,t}^{\textrm{U}}, \boldsymbol{a}_{v,t}^{\textrm{U}})$ to the BS for the calculation of the global $Q$ function $Q_{\textrm{G}}(\boldsymbol{o}_t, \boldsymbol{a}_t)$. In particular, the BS aggregates local $Q$ values $Q_{\boldsymbol{\psi}_{v}^{\textrm{U}}}(\boldsymbol{o}_{v,t}^{\textrm{U}}, \boldsymbol{a}_{v,t}^{\textrm{U}})$ of all users and the local $Q_{\boldsymbol{\psi}^{\textrm{B}}}(\boldsymbol{o}_{v,t}^{\textrm{B}}, \boldsymbol{a}_{v,t}^{\textrm{B}})$ of the BS using \eqref{sumgq} or \eqref{qmix}.
Based on the state and action information, as well as the global $Q$ value, the parameters of the critic network and actor network of the BS and users are updated according to \eqref{bscri}--\eqref{eq:user-actor-update-short}.
\nbl{Note that, in the training procedure of the proposed method, we update only the actor and critic networks. We do not need to train DNC since it is an action-search module and does not contain trainable neural network parameters.}
% \nbl{Note that DNC is executed at each RL step after the BS actor generates a continuous action and does not contain trainable neural-network parameters, thus it does not require a separate training process.}
% In the online implementation stage, the feedback selection and power control are independently determined by each user. 
In the online stage, since the designed method is trained by an offline training method, it does not need any data to update the neural network models, which significantly reduces the implementation complexity. 
\nbl{The designed method can be deployed in practical
scenarios since 1) the proposed framework explicitly considers practical wireless constraints and 2) the information required by the proposed method can be easily obtained
in practical systems.}

\subsection{Complexity Analysis}
The computational complexity of the proposed VDAC-DNC scheme consists of three main components: 1) the training of actor network, critic network, 2) the DNC process and 3) the training of global $Q$ function at the BS.

First, we analyze the complexity of training the actor and critic network.
% We assume the computation of the data or gradient transmission in the MLP has a complexity of $\mathcal{O}(N)$, where $N$ is the number of parameters in the MLP.
% In particular, the BS and each user maintain local Q-networks $Q_{\boldsymbol{\psi}^{\textrm{B}}}$ and 
% $Q_{\boldsymbol{\psi}_v^{\textrm{U}}}$, actor networks $\boldsymbol{\pi}_{\boldsymbol{\phi}^{\textrm{B}}}$ and $\boldsymbol{\pi}_{\boldsymbol{\phi}_v^{\textrm{U}}}$. 
% Each network is approximated by a deep neural network (DNN).
Let $N_Q^{\textrm{B}}, N_Q^{\textrm{U}}, N_{\pi}^{\textrm{B}}, N_{\pi}^{\textrm{U}}$ be the numbers of parameters of the actor and critic networks. 
Then, due to the parallel training across the BS and the users, the complexity of training the actor and critic networks per iteration is $\mathcal{O}\big(\max\left\{N_{\pi}^{\textrm{B}} + N_Q^{\textrm{B}},  N_Q^{\textrm{U}}  + N_{\pi}^{\textrm{U}}\right\}\big)$, where $\max\{\cdot \}$ implies that the training complexity of the actor and critic networks depends on the training complexity $N_{\pi}^{\textrm{B}} + N_Q^{\textrm{B}}$ and the training complexity $N_Q^{\textrm{U}} + N_{\pi}^{\textrm{U}}$ at a user~\cite{11216397,11270936}.
%indicates that per-iteration trainig time is dominated by the slower branch, i.e., the larger computational cost between the BS update and the user update.
%取决于BS还是用户的actor critic network哪个训练的久
%Let $V$ be the number of users.
% For a feedforward DNN with $N$ trainable parameters, both the forward 
% propagation and the backpropagation steps have computational complexity 
% $\mathcal{O}(N)$. In the proposed framework, 

%\paragraph{Complexity of DNC.}
Then we analyze the computational complexity of the DNC process.
As shown in Algorithm~\ref{DNC}, the BS checks the Q-values of neighboring actions using its critic networks in each iteration. 
Thus, the complexity of the DNC per iteration is $\mathcal{O}\left( N_Q^{\textrm{B}}\right)$.
Let $y_0$ and $\epsilon_0$ be the initial values of $y$ and $\epsilon$ in
Algorithm~\ref{DNC}, and let $c_k$ and $c_\beta$ be the corresponding
decrement steps. Then the total number of DNC iterations is bounded by~\cite{bertsimas1993simulated}
\begin{equation}
    K_{\textrm{D}}
  \le
  \max\!\left\{
    \left\lceil \frac{y_0}{c_k} \right\rceil,
    \left\lceil \frac{\epsilon_0}{c_\beta} \right\rceil
  \right\},
\end{equation}
where $\left\lceil \cdot \right\rceil$ is the ceiling function.
Consequently, the complexity of the DNC process at an RL iteration is
$\mathcal{O}\!\left(   K_{\textrm{D}}  N_Q^{\textrm{B}} \right)$.
%\paragraph{Complexity of the VDAC--DNC algorithm.}
% In the main VDAC--DNC procedure (Algorithm~\ref{VDAC}), at each time slot the
% following neural networks are updated once (one forward and one backward
% pass):

% \begin{itemize}
%   \item BS actor $\boldsymbol{\pi}_{\boldsymbol{\phi}^{\textrm{B}}}$:
%         complexity $\mathcal{O}(V N_{\pi}^{\textrm{B}})$;
%   \item BS Q-network $Q_{\boldsymbol{\psi}^{\textrm{B}}}$:
%         complexity $\mathcal{O}(V N_Q^{\textrm{B}})$;
%   \item User actors $\{\boldsymbol{\pi}_{\boldsymbol{\phi}_v^{\textrm{U}}}\}$:
%         complexity $\mathcal{O}(V N_{\pi}^{\textrm{U}})$;
%   \item User Q-networks $\{Q_{\boldsymbol{\psi}_v^{\textrm{U}}}\}$:
%         complexity $\mathcal{O}(V N_Q^{\textrm{U}})$;
%   \item Mixing hypernetwork $\boldsymbol{\Gamma}_{\boldsymbol{\Phi}}$:
%         complexity $\mathcal{O}(N_{\textrm{mix}})$.
% \end{itemize}

With regards to the training of the global $Q$ function, if we use \eqref{sumgq} to approximate the global $Q$ function, the computational complexity is $\mathcal{O}(V)$ since we only need to summarize the $Q$ functions of all agents.
On the other hand, if we use \eqref{qmix} to approximate the global $Q$ function,
the computational complexity of the global $Q$ function depends on the MLP architecture. Let $N_{Q}^{\textrm{M}}$ be the number of parameters in the MLP. The training complexity of the global $Q$ function is $\mathcal{O}(N_{Q}^{\textrm{M}})$.
% The complexity of training 

Given the aforementioned analysis results and assuming we use \eqref{qmix} to approximate the global $Q$ function, the overall complexity of VDAC-DNC per iteration is
\begin{equation}
  \mathcal{O}\Big(
     K_{\textrm{D}}  N_Q^{\textrm{B}} + 
    \textrm{max}\left\{N_{\pi}^{\textrm{B}} + N_Q^{\textrm{B}} ,  N_Q^{\textrm{U}}  + N_{\pi}^{\textrm{U}}\right\} + N_Q^{\textrm{M}}
  \Big).
  \label{eq:complexity-per-step}
\end{equation}

\section{Simulation Results}\label{se:simulation}
For our simulations, we consider a network consisting of one BS and $V=15$ users.
The distance between each user and the BS is randomly selected within the range of $[8,50]$ m, and the downlink interference $I^{\textrm{D}}_{n,v}$ is randomly selected within the range of $[10^{-9},10^{-8}]$ W.
Other parameters are summarized in Table~\ref{tab:params}.
\nbl{To validate the performance of our proposed scheme, we consider three benchmarks: 1) VDAC (w/o DNC) that optimizes sub-image allocation and feedback selection using the designed RL method but does not use DNC to find better actions per iteration\nbl{~\cite{JMLR:v21:20-081}}, 2) a DNC based MAQAC algorithm that consists of a centralized critic and decentralized actors\nbl{~\cite{lowe2017multi}}, and 3) DNC-based independent Q algorithm in which each user utilizes a deep $Q$ network (DQN) to optimize its transmit power and feedback selection without considering the actions of other users\nbl{~\cite{tan1993multi}}. 
%The latter two benchmarks use the same DNC configuration in our simulations for fair comparisons.
Both the proposed VDAC-DNC and VDAC (w/o DNC) methods use a MLP to approximate the global Q function~\cite{JMLR:v21:20-081}.
\nbl{Baseline 1) is used to show the benefits achieved by the designed DNC}.
Baseline 2) is used to compare value-decomposition training with a centralized critic compared to a distributed critic.
Baseline 3) is used to show the benefit of collaborative MARL over independent trained agents.}
% \nbl{Unless otherwise specified, the simulation parameters are set based on~\cite{yan2022semanticRA}: the bandwidth is $B=1.8\times10^{5}$ Hz, the noise power is $N_0=10^{-9}$ W, the BS downlink transmit power is $P^{\textrm{D}}=1$ W, and each user is assigned $N=6$ sub-channels.
% Moreover, the outage probability is $\xi=0.2$, each image must be transmitted within $T=6$ time slots, the neighboring-action search range of DNC is $l=10$, the number of reported content categories is $D=4$, $\varepsilon=1$, and the available feedback power of each user is $p_{\textrm{max}}=5$ W.}
\begin{table}[t] 
\caption{Simulation parameter\cite{yan2022semanticRA}.}
\centering
\renewcommand{\arraystretch}{1.2}
\begin{tabular}{|c|c|c|c|}
\hline
\textbf{Parameters} & \textbf{Values} & \textbf{Parameters} & \textbf{Values} \\
\hline
$B$                 & $1.8\times10^{5}\ \mathrm{Hz}$ & $N_0$ & $1\times 10^{-9}\ \mathrm{W}$ \\
\hline
$P^{\textrm{D}}$                 & $1 \mathrm{W}$          & $N$       & $6$ \\
\hline
$N_0$                 & $10^{-9} \mathrm{W}$                           & $\xi$       & $0.2 $ \\
\hline
$T$ & $6$           & $l$ & $10$ \\
\hline
$V$          & $15$              & $\varepsilon$ & $1$ \\
\hline
$D$       & $4$                 & $p_{\textrm{max}}$ & $5 \mathrm{W}$ \\
\hline
\end{tabular}
\label{tab:params}
\end{table}
% \begin{table}[t]
% \caption{Key neural network hyperparameter settings~\cite{akkerman2024dynamic}}
% \centering
% %\renewcommand{\arraystretch}{1.2}
% \begin{tabular}{|c|c|}
% \hline
% \textbf{Parameters} & \textbf{Values} \\
% \hline
% Optim & SGD \\
% \hline
% $\lambda_a$ & $1\times10^{-4}$ \\
% \hline
% $\lambda_c$ & $1\times10^{-3}$ \\
% \hline
% $\gamma$ & $0.99$ \\
% \hline
% The size of a hidden layer at MLP & 32 \\
% \hline
% Activation function of actor & Tanh\\
% \hline
% Update frequency of actor and critic & 40 \\
% \hline
% Hidden layer size of QMix MLP & 32 \\
% \hline
% \texttt{clipping\_factor} & $0.2$ \\
% \hline
% Fourier order of state encoding& 3 \\
% \hline
% Encoding batch size of VAE & 16 \\
% \hline
% Latent channel of VAE & 4 \\
% \hline
% % \texttt{hiddenActorLayerSize} & $16$ \\
% % \hline
% \end{tabular}
% \label{tab:nn_params_key_2col}
% \end{table}

\begin{figure}[t]
    \centering
    \includegraphics[width=8cm]{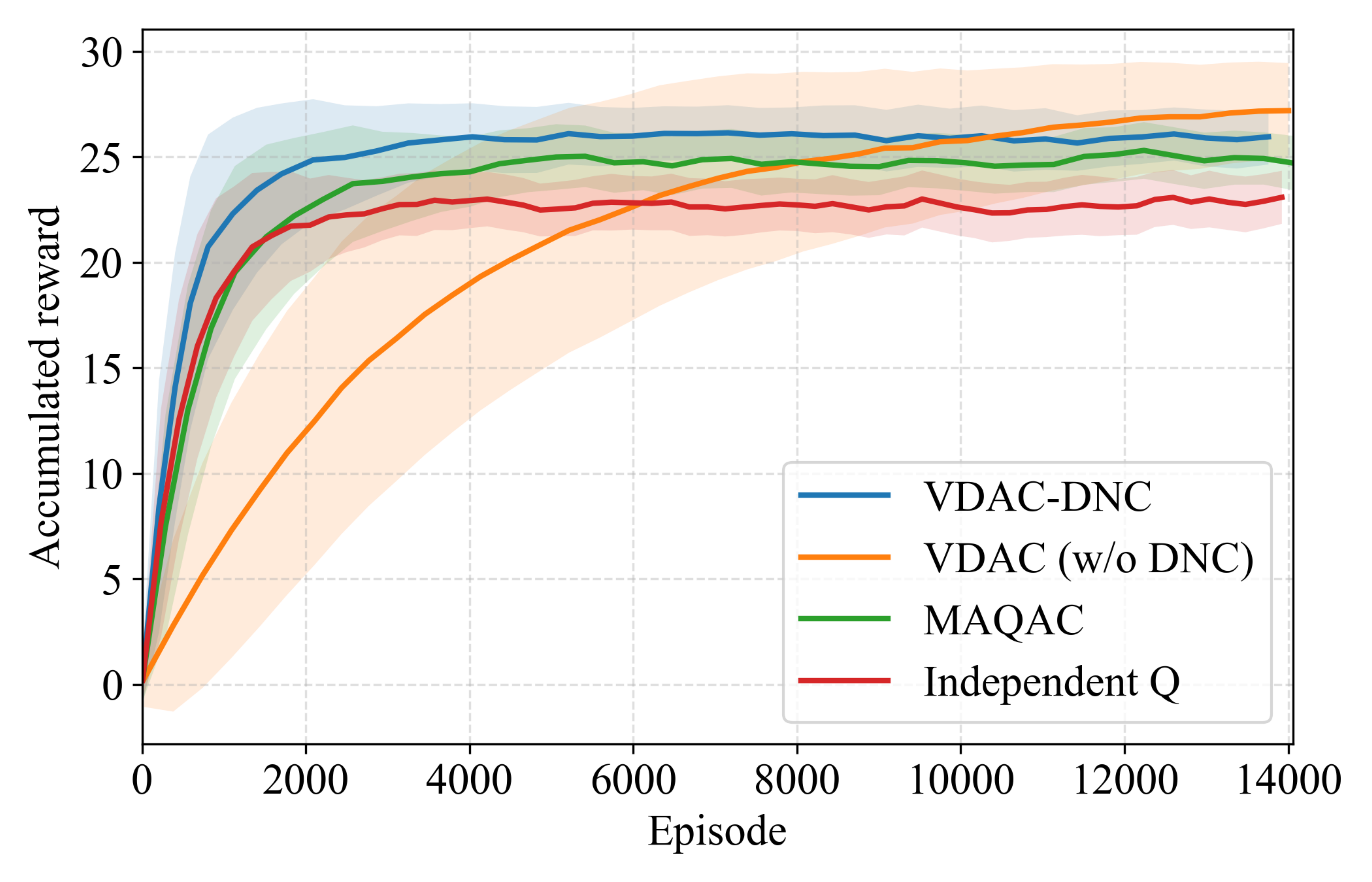}
    \vspace{-0.2cm}
    \nblcaption{Convergence of the considered algorithm}
    \label{fig:reward}
    \vspace{-0.4cm}
\end{figure}

% \begin{figure}[t]
%     \centering
%     \includegraphics[width=8cm]{Journal_Figure/reward_vs_power.pdf}
%         \vspace{-0.3cm}
%     \caption{Accumulated reward versus the total power of a user}
%     \label{dif-power}
%       \vspace{-0.6cm}
% \end{figure}

% This section presents numerical results validating the enhanced image transmission and performance of our proposed schemes.

% 2是为了展示分布式和集中式的
% 3是为了展示分布式多智体结构

% 分布式和集中式
% 分布式
% 
% Critic 是global Q训练的，选择动作使得global Q
% Critic 是independent Q 训练的，选择动作使得自己的Q增大的动作

Fig.~\ref{fig:reward} shows the convergence of the considered algorithms during the training process. The shaded area represents half of the standard deviation of the accumulated training reward. 
From Fig.~\ref{fig:reward}, we see that as the number of training episodes increases, the rewards of all considered schemes increase first and then stabilize.
In particular, VDAC-DNC improves the reward by up to 5.04\% and 18.55\% compared to the MAQAC and independent Q when the number of episodes is 14000. 
The 5.04\% gain stems from the fact that the proposed method used a MLP to approximate the global function using local Q functions of the users and BS, while the MAQAC uses a high-dimensional centralized critic to approximate a global Q function, causing overfitting.
%Hence, the proposed method can make the training of global Q function easier.
The 18.55\% gain stems from the fact that the proposed method approximates a global Q function to update the local Q function of each agent, thus allowing the users and the BS to collaboratively select feedback and sub-image transmission to maximize their team reward in \eqref{team reward} instead of their individual rewards.

\begin{figure}[t]
    \centering
    \includegraphics[width=8cm]{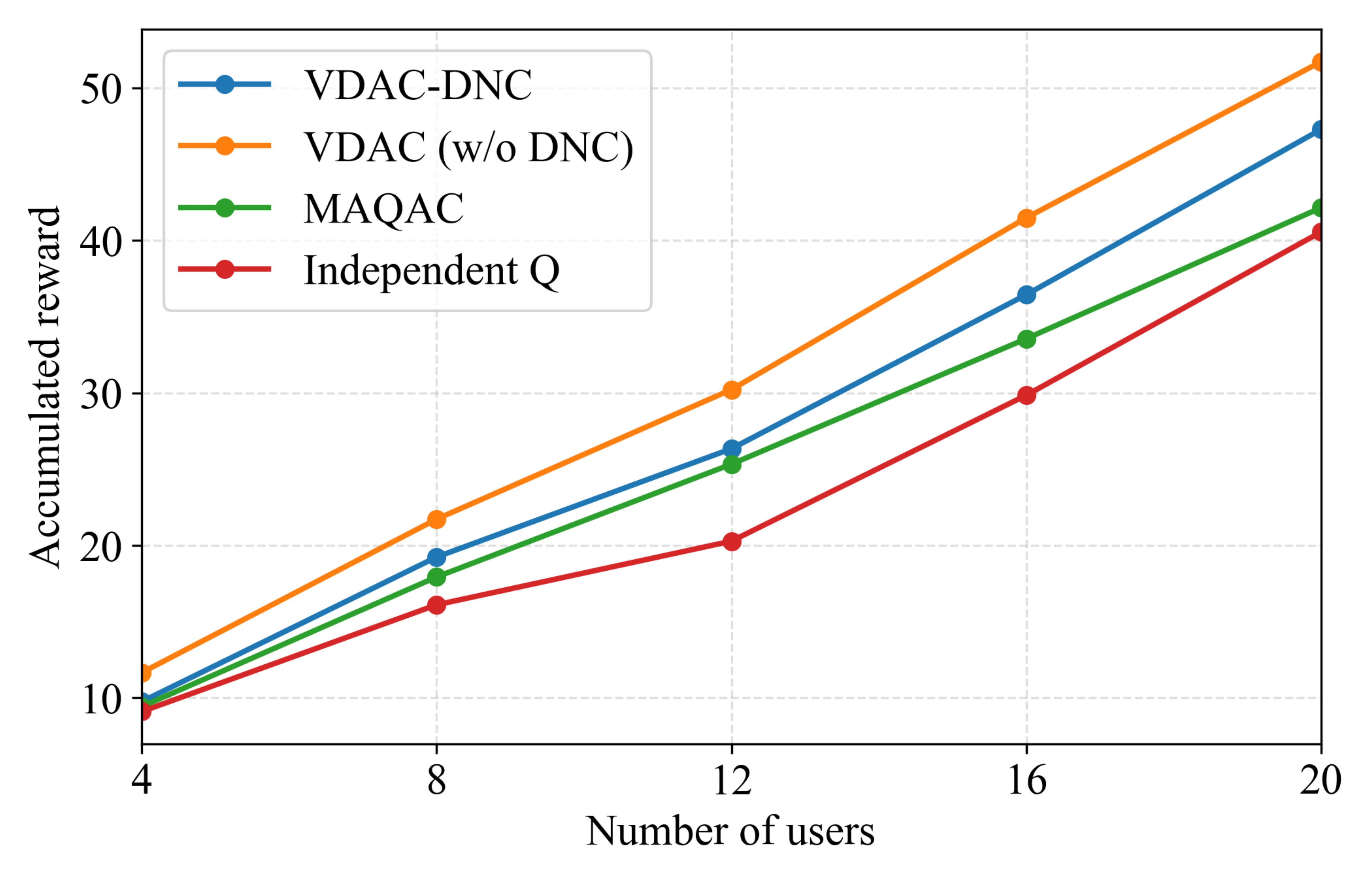}
    \vspace{-0.2cm}
    \caption{Accumulated reward versus the number of users}
    \label{dif-user}
    \vspace{-0.4cm}
\end{figure}

Fig.~\ref{dif-user} shows how the accumulated rewards resulting from the considered methods at convergence vary when the number of users ranges from 4 to 20.
In this figure, we see that the rewards of all the schemes increase as the number of users increases, since the number of images to be transmitted increases.
Fig.~\ref{dif-user} also shows that the gap in terms of the reward between the proposed VDAC-DNC scheme and the MAQAC increases as the number of users increases.
This is because our designed method uses a MLP to nonlinearly approximate the global Q function using the local Q function of the BS and users, which provides a structured and scalable approximation that remains effective as the number of users grows.
In contrast, the MAQAC scheme relies on a high-dimensional centralized critic taking into account the states and actions of all the agents to directly approximate the global Q-function. As a result, this critic becomes harder to train and more prone to overfitting as the number of users increases.
% the proposed VDAC-DNC scheme can enhance the performance by up to 12.17\% and 16.57\% compared to the MAQAC scheme and the independent Q scheme when the number of users is 20.
% The 12.17\% gain is because our designed method uses a MLP to nonlinearly approximate the global Q function using the local Q function of the BS and users, thus avoiding overfitting by using a high-dimensional centralized critic to approximate a global Q function.
% The 16.57\% gain stems from the fact that the proposed method enables users to cooperatively determine when to transmit the feedback to avoid a number of feedback collapses in one time slot, while the users under the independent Q method transmit the feedback without considering the actions of other users, which causes the feedback transmission to collapse and fail with high probability.
%BS and the users to collaboratively determine their actions in order to optimize their team reward (i.e., total MSE of all the images) instead of their individual rewards.
Fig.~\ref{dif-user} also shows that the proposed VDAC-DNC scheme can enhance the performance by up to 16.57\% compared to the independent Q scheme when the number of users is 20.
The 16.57\% gain stems from the fact that the proposed method allows users to coordinate their feedback transmission, preventing a large number of users from sending feedback in the same time slot thus reducing feedback transmission speed. 

% \begin{table}[t]
% \centering
% \caption{Variance of feedback for different schemes and user numbers.}
% \begin{tabular}{|l|c|c|c|c|c|}
% \hline
% Scheme / Users & 4 users & 8 users & 12 users & 16 users & 20 users \\
% \hline
% Independent Q 
% & 3.14 & 5.92 & 2.89 & 6.14 & 8.14 \\
% \hline
% QAC
% & 1.47 & 1.47 & 14.33 & 8.22 & 8.47 \\
% \hline
% VDAC (w/o DNC) 
% & 2.92 & 6.47 & 2.92 & \textbf{1.58} & 2.81 \\
% \hline
% VDAC-DNC
% & \textbf{0.22} & \textbf{0.14} & \textbf{0.56} & 3.89 & \textbf{1.47} \\
% \hline
% \end{tabular}
% \label{t:var-user}
% \end{table}
%\begin{table}[t]
% \centering
% \caption{Number of feedback for different schemes in a episode.}
% \begin{tabular}{|l|c|c|c|c|c|c|c|}
% \hline
% Scheme / Time & 1 & 2 & 3 & 4 & 5 & 6 & Variance \\
% \hline
% Independent Q 
% & 8 & 8 & 7 & 3 & 5 & 5 & 3.33 \\
% \hline
% QAC
% & 5 & 6 & 7 & 5 & 4 & 3 & 1.67 \\
% \hline
% VDAC (w/o DNC) 
% & 5 & 6 & 6 & 4 & 4 & 5 & 0.67 \\
% \hline
% VDAC-DNC
% & 5 & 5 & 5 & 5 & 5 & 6 & 0.14 \\
% \hline
% \end{tabular}
% \label{t:var-user}
% \end{table}

% \begin{table}[t]
% \centering
% \caption{Convergence time with and without DNC under different users settings.}
% \begin{tabular}{|l|c|c|c|c|c|}
% \hline
% Scheme / Users & 4  & 8  & 12  & 16  & 20  \\
% \hline

% VDAC w/o DNC (s) & 9586  & 17240 & 26224 & 33465 & 40768 \\ \hline
% VDAC-DNC (s)  & \textbf{2786}  & \textbf{3588}  & \textbf{6750}  & \textbf{8039}  & \textbf{10122} \\ \hline
% Ratio   & 3.44     & 4.80     & 3.89     & 4.16     & 4.03     \\ \hline
% \end{tabular}
% \label{t:dif-user}
% \end{table}
\begin{figure}
    \centering
    \includegraphics[width=8cm]{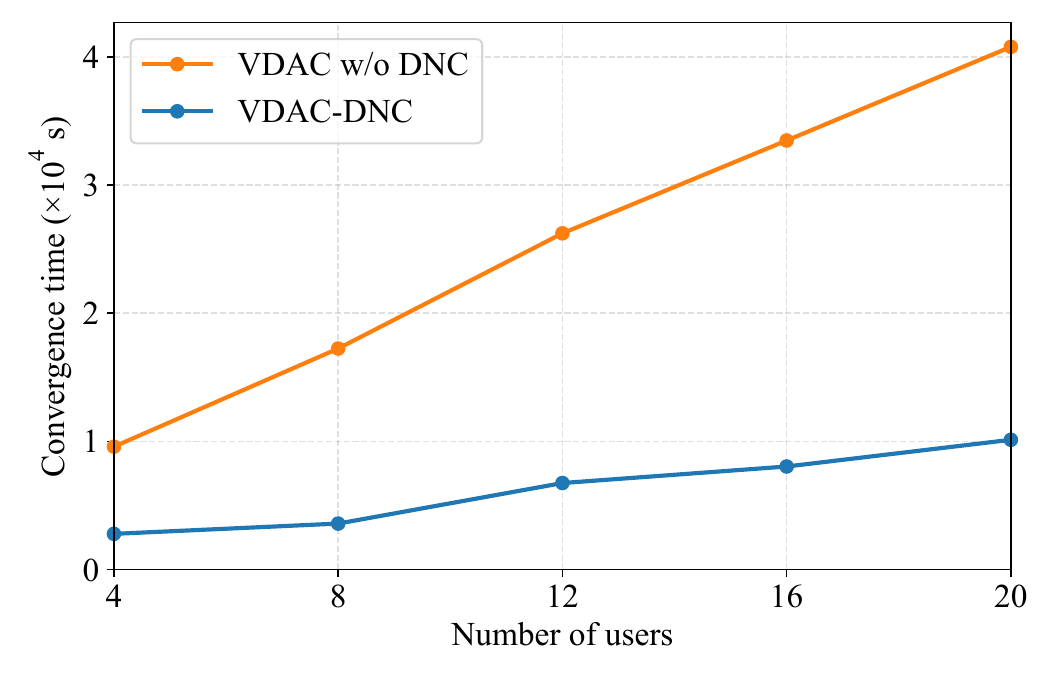}
    \vspace{-0.2cm}
    \caption{Convergence time with and without DNC under different user settings}
    \label{fig:dif-user-time}
    \vspace{-0.2cm}
\end{figure}

\begin{figure}[t]
    \centering
    \includegraphics[width=8cm]{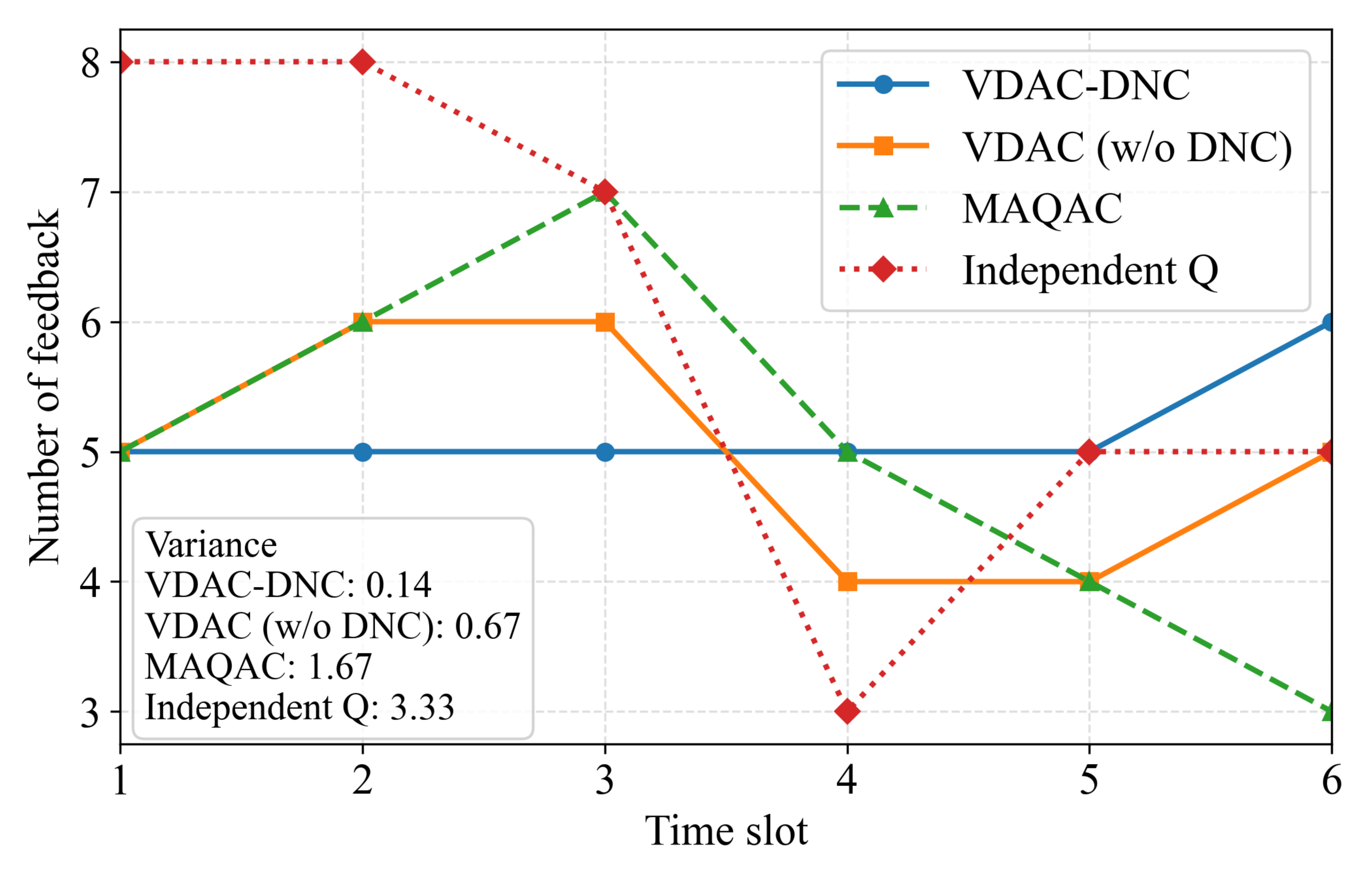}
    \vspace{-0.2cm}
    \caption{Number of feedback transmitted over an episode}
    \label{fig:fb}
    \vspace{-0.4cm}
\end{figure}

\begin{figure*}[t]
    \centering
    \includegraphics[width=15cm]{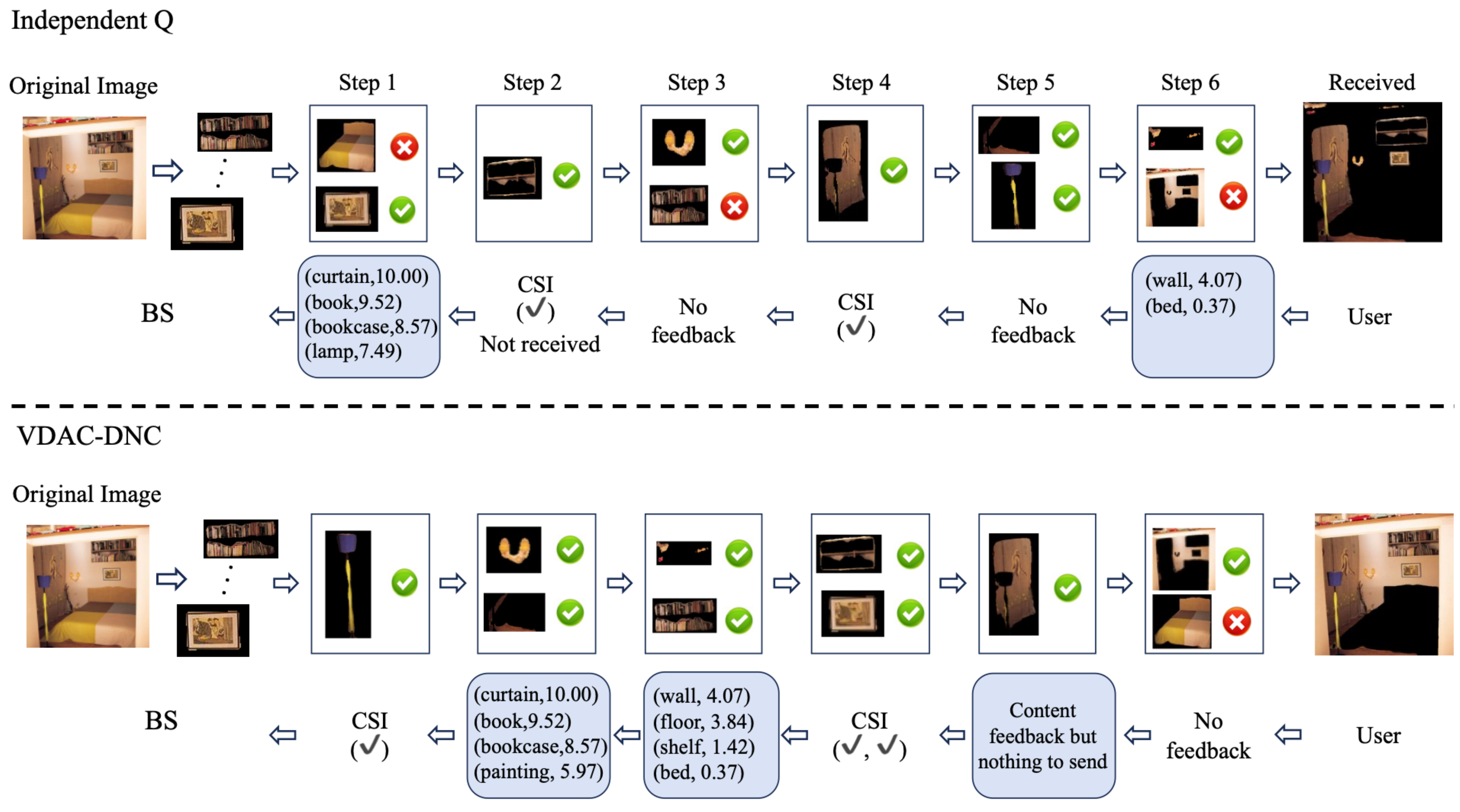}
    \vspace{-0.2cm}
    \caption{Visualization of sub-images and feedback messages transmission in each step during an episode}
    \label{fig:visualization}
    \vspace{-0.2cm}
\end{figure*}

\begin{figure}[t]
    \centering
    \includegraphics[width=8cm]{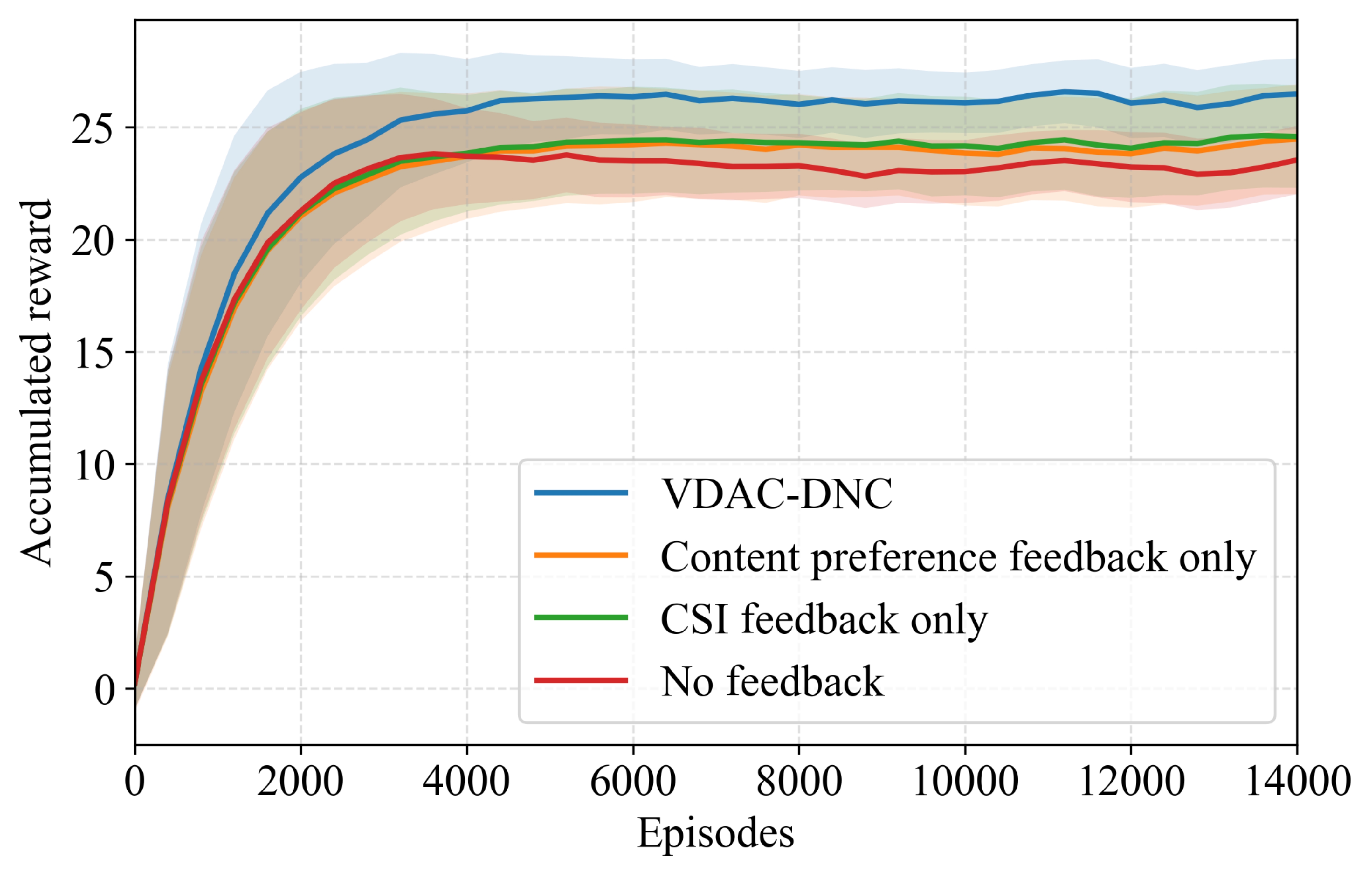}
    \vspace{-0.2cm}
    \caption{Convergence of the designed methods with different types of feedback}
    \label{fig:dif-fb}
    \vspace{-0.4cm}
\end{figure}

%in order to optimize their team reward (i.e., total MSE of all the images) instead of individual rewards.
%In addition, 
%We see that the proposed VDAC-DNC method is lower than that of the VDAC (w/o DNC).
%This is because the DNC find the better action across the neighboring action, resulting the action 

% Table~\ref{t:var-user} shows the variance of the number of feedback transmitted to the BS at each time step when the number of users ranges from 4 to 20.
% The variance represents fluctuations in the number of feedback sent by each user to the BS at each step, reflecting the degree of user cooperation.
% If the variance is high, it indicates that at certain time steps, a large number of users may concentrate their feedback transmissions, potentially causing channel congestion and transmission failures. Conversely, at other time steps, no users may send feedback, resulting in wasted channel resources.
% A high variance indicates that, in some time steps, many users transmit feedback simultaneously, which can congest the channel and increase the risk of transmission failures. In other time steps, few or no users transmit feedback, resulting in wasted channel resources.
% From this table, we see that the variance of the proposed scheme VDAC-DNC is much lower than the compared MAQAC scheme and the independent Q scheme, indicating the value decomposition framework play a key role in promoting cooperation among agents by aligning individual utilities with the shared team objective.

Fig.~\ref{fig:dif-user-time} shows the convergence time of the proposed scheme with DNC and without DNC under different user settings.
From this figure, we see that the convergence time of both schemes increases as the number of users increases.
This is because the joint action space grows significantly with the number of users, requiring the MARL to spend more time for exploration and convergence towards optimal actions.
Fig.~\ref{fig:dif-user-time} also shows that the proposed scheme with DNC improves the speed of convergence by up to 4.16x compared to the proposed scheme without DNC when the number of users is 16. This is because DNC selects an action with the maximum $Q$ value from the small sized constructed neighboring action space instead of exploring all the actions.
%can construct a smaller discrete neighboring action space using a continuous action and 

% \begin{table}[t]
% \centering
% \nblcaption{Comparison between the proposed method and one shot transmission.}
% \label{tab:one-shot}
% \begin{tabular}{|c|c|c|c|}
% \hline
% Method & \makecell{Packet Loss\\Rate} & \makecell{Weighted\\MSE} & \makecell{Transmission\\time} \\
% \hline
% \makecell{One shot\\transmission (1 W)} & 1.0 & 1.0 & -- \\
% \hline
% Proposed method (1 W) & \textbf{0.18} & \textbf{0.44} & 2.43 s \\
% \hline
% \makecell{One shot\\transmission (3 W)} & 0.92 & 0.92  & \textbf{1.38 s} \\
% \hline
% Proposed method (3 W) & \textbf{0.14} & \textbf{0.40} & 1.70 s \\
% \hline
% \end{tabular}
% \end{table}
\begin{table}[t]
\centering
\begingroup
\color{black}
\nblcaption{Comparison between the proposed method and one shot transmission.}
\label{tab:one-shot}
\begin{tabular}{|c|c|c|c|}
\hline
Method & \makecell{Packet Loss\\Rate} & \makecell{Weighted\\MSE} & \makecell{Transmission\\time} \\
\hline
\makecell{One shot\\transmission (1 W)} & 1.0 & 1.0 & -- \\
\hline
Proposed method (1 W) & \textbf{0.18} & \textbf{0.44} & 2.43 s \\
\hline
\makecell{One shot\\transmission (3 W)} & 0.92 & 0.92  & \textbf{1.38 s} \\
\hline
Proposed method (3 W) & \textbf{0.14} & \textbf{0.40} & 1.70 s \\
\hline
\end{tabular}
\endgroup
\vspace{-0.4cm}
\end{table}

In Fig.~\ref{fig:fb}, we show how the number of feedback messages transmitted by 8 users at each time slot varies in an episode.
% This metric allows us to infer whether users are coordinating their feedback transmissions. With effective cooperation, feedback to be transmitted is more evenly distributed across time slots rather than being concentrated in a few slots, thereby reducing transmission congestion and collision risks.
%We also show the variance of the number of feedback to represent its fluctuations.
%通过这个数据，我们可以看出用户之间是否有合作，使得每个time slot均匀分配feedback而不是集中在某几个time slot
From this figure, we see that the proposed method transmits around 5 feedback messages per time slot, and the variance of the number of transmitted feedback messages over 6 time slots is 0.14.
In contrast, the independent Q scheme unreliably transmits either 8 feedback messages or 3 feedback messages with a variance of 3.33, which may either increase feedback transmission delay or waste feedback transmission opportunity.
This is because the proposed method enables the users to collaboratively determine when to transmit their feedback messages thus optimizing the number of feedback messages that should be transmitted under a limited bandwidth channel and meeting the feedback transmission delay requirements. 
% In contrast, each user that uses the independent Q scheme only considers its individual reward and does not consider the actions of other users, causing uncoordinated feedback transmissions and a higher risk of feedback transmission failures. 
In contrast, under independent Q-learning, users transmit feedback independently without accounting for others' actions.

\nbl{Table~\ref{tab:one-shot} shows the gains (i.e., packet loss rate, weighted MSE, transmission time) achieved by the proposed method compared to the one shot transmission method where the transmitter directly transmits the entire image without receiving feedback under different downlink power levels.
From Table~\ref{tab:one-shot}, we see that the proposed method improves the packet loss rate by up to 82.1\% and the weighted MSE by up to 66.4\% compared to one shot transmission when the downlink power is 1~W.
These gains are achieved by the fact that the proposed method uses CSI and content preference feedback to optimize sub-image transmission and sub-channel allocation.
Moreover, since the packet loss rate of one shot transmission is 1 (i.e., all the image transmitted are lost), we cannot calculate its transmission time when the downlink power is 1~W.
Table~\ref{tab:one-shot} also shows that the one shot transmission method reduces the transmission time by up to 19.03\% compared to the proposed method when the downlink power is 3~W. This is because one shot transmission sends the whole image at once, while the proposed method transmits the image sequentially with feedback, which introduces additional transmission time for feedback and sequential sub-image transmission.
However, the proposed method can improve the packet loss rate and weighted MSE by up to 84.4\% and 55.9\% compared to one shot transmission.}
% This is because the proposed
% method can use the CSI and content-preference feedback to assist the BS in optimizing the
% sub-image transmission according to both channel quality and user-specific semantic importance.

In Fig.~\ref{fig:visualization}, we show the visualization of the sub-image and feedback transmission during an image transmission.
%1: Independent Q : Step2:CSI delay -> Step3:fail
%2: VDAC-DNC: Step6: bed fail -> not important and leave it to the poor channel
%3：VDAC-DNC: Step1: CSI feedback
% 由于系统设置每个三个time step CSI会改变, VDAC-DNC在step1和step4及时更新CSI确保sub-image能够可靠的传输，
% 在确保知道CSI后，发送相应的content feedback，确保重要的sub-image能够成功传输
% 同时在第一个step知道CSI和content weight之前，不会贸然传输过多sub-image
% 在最后一个step，考虑到bed的weight较低，将其安排在较差的信道
% 相反，Independent Q在不知道CSI时，BS不会考虑user的action，因此传了较多的sub-image，使得bed传输失败，同时user不考虑BS的动作，选择传输content feedback，使得后续传输仍然可能失败
% 而在Step2时，则由于多用户同时发送feedback，导致CSI feedback delay，间接使得Step3的book传输失败
% 且
% 因此整体接收到的image missing part比proposed method多很多。
% 在图中，BS发送时的×代表sub-image在传输时由于信道条件差导致丢包，而✔则代表sub-image成功传输
% 而在CSI feedback的1个✔代表该时刻有有一张sub-image在用户处被成功接收
% 而content feedback则会发送user处未收到的sub-image的且未发送过的最重要的4个content weight给BS，直到发送完所有可以发送的content weight（e.g.,In VDAC-DNC, 由于Step 5前所有能发的content weight已经发送完毕，此时没有多余的content可以发）
% 同时，由于在发送feedback时的宽带有限，如果在某一个时刻发送feedback的用户太多，则可能会到feedback传输被delay（e.g., in independent Q scheme，Step2 的CSI feedback delayed）
%在Step1的时候，由于BS没有任何信息，此时BS会随机选择sub-image进行发送，但proposed method会更加稳健，只传输一张确保其他可能重要的sub-image不会丢包
% In VDAC-DNC method，BS在Step2收到content feedback后，在Step3时，将重要的“book”和相对不那么重要的“shelf”一起发送，提高了sub-channel利用率和图像传输效率
In the figure, each $\times$ beside the sub-image implies that the sub-image is not received by the user due to poor channel conditions, whereas each $\checkmark$ represents that the sub-image is successfully delivered to the user. 
The CSI feedback with a $\checkmark$ at a given time step indicates that one sub-image is successfully received by the user at that step.
For content-preference feedback, each user sends the semantic weights of the top-$4$ most important sub-images that are (i) still missing at the user side and (ii) have not been sent before. This continues until all eligible content importance weights have been sent.
For example, at step 5 of the VDAC-DNC method, all content importance weights have been sent to the transmitter.
Hence, no additional content importance weights needs to transmit.
%(e.g., under VDAC-DNC, by step~5 all eligible content weights have already been transmitted, so no additional content feedback is available afterward).
In this simulation, CSI distributions are varied every three time slots.
From the figure, we see that the proposed method enables the BS to obtain the CSI at steps 1 and 4 to maintain a high data rate for sub-image transmission.
%the user to receive the most important sub-images and regenerate the image accurately.
% This is because the BS obtains the CSI at Steps 1 and 4 to maintain a high data rate for sub-image transmission.
% At Step~1, the BS has no CSI or content information, so it can only choose sub-images randomly for transmission. 
% The proposed method acts more conservatively compared to the independent Q scheme by sending only one sub-image to reduce the risk of losing other potentially important sub-images.
Fig.~\ref{fig:visualization} also shows that, at step~2, the CSI feedback in the independent Q scheme is 
not transmitted to the BS due to the delay constraint.
%delayed (i.e., does not meet the requirement of feedback transmission).
This is because the uplink bandwidth for feedback transmission is limited, and users in the independent Q scheme do not coordinate their feedback transmission, causing a large number of users to send feedback simultaneously.
% At Step~3, the user chooses not to send the feedback to the BS.
% This is because the channel may not be good enough to ensure the feedback can be sent as required, thus the user chooses not to waste power transmitting the feedback.
%too many users transmit feedback simultaneously, feedback will  be delayed. 
From Fig.~\ref{fig:visualization}, we also see that under the proposed VDAC-DNC, after the BS receives content-preference feedback at step~2, it schedules transmissions more efficiently at step~3 by sending the second important \emph{book} together with a less important \emph{shelf}.
%which improves sub-channel utilization and overall image transmission efficiency.

In Fig.~\ref{fig:dif-fb}, we show how the types of feedback transmitted by the users affect the accumulated reward.
In this figure, we compare the proposed VDAC-DNC method with: i) the proposed method in which the users only send content preference feedback to the BS, ii) the proposed method in which the users only send CSI feedback to the BS, and iii) the proposed method in which the users remain silent and will not send any feedback to the BS.
From Fig.~\ref{fig:dif-fb}, we see that the proposed VDAC-DNC method can improve the reward by up to 8.3\%, 7.1\%, and 13.1\% compared to the considered schemes (i), (ii), and (iii).
The 8.3\% gain stems from the fact that the proposed method can send CSI feedback to the BS, allowing the BS to learn the channel status thus improving sub-image transmission rate.
The 7.1\% gain stems from the content preference feedback transmission allowing the BS to prioritize important sub-images.
The 13.1\% gain is achieved by the joint CSI and content preference feedback transmission. 
Fig.~\ref{dif-range} shows how the accumulated reward changes as range $l$ of neighboring actions for DNC varies.
% the rewards of all considered schemes increase first and then decrease.
In this figure, we see that the gap in terms of the accumulated reward between the proposed method and the independent Q scheme increases as $l$ increases from 1 to 4.
%这是因为当l增大时，action set y会增大，而我们的算法由于critic由于训练时考虑全局信息，能在更大的action set选出更好的，所以相比independent Q能带来更低perfermance 下降
This is because, as $l$ increases, the neighboring action space increases.
The critic of the proposed method, which is trained using global information, can still select neighboring actions that increase the team reward, thus leading to a mild performance decrease.
In contrast, the independent Q scheme relies on a locally trained critic and selects the neighboring actions that increase the individual reward but may decrease the team reward. 
This problem becomes worse as the neighboring action space increases.
Fig.~\ref{dif-range} also shows that the proposed method with $l=2$ improves the accumulated reward by up to 4.79\% and 1.78\% compared to the designed method with $l=0$ and 4.
The 4.79\% gain stems from the fact that the designed method with $l=2$ can explore neighboring actions to help the BS find a better discrete action (i.e., a better sub-image allocation policy) than the initially chosen discrete action $\bar{\boldsymbol{a}}_{v,t}^{\textrm{B}}$. 
% In contrast, the designed method with $l=0$ cannot find any alternative neighboring action that is better than the action $\bar{\boldsymbol{a}}_{v,t}^{\textrm{B}}$.
The 1.78\% gain stems from the fact that when $l= 4$, the BS must traverse a large number of neighboring actions and it may select a neighboring action whose $Q$-value is overestimated by the critic networks, thus decreasing the accumulated reward.

\begin{figure}[t]
    \centering
    \includegraphics[width=8cm]{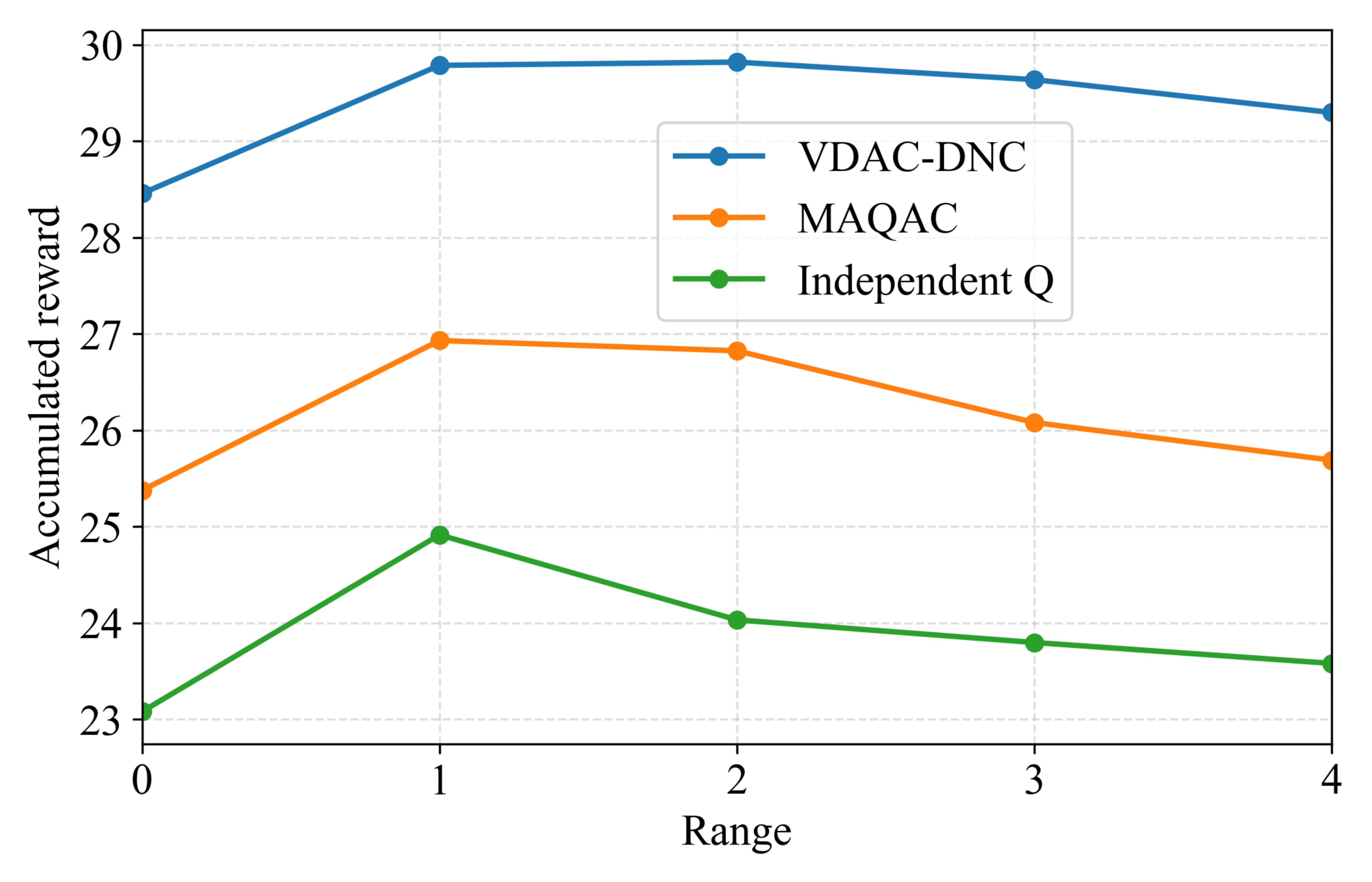}
    \vspace{-0.2cm}
    \caption{Accumulated reward versus the DNC search range}
    \label{dif-range}
    \vspace{-0.4cm}
\end{figure}
\begin{figure}
    \centering
    \includegraphics[width=8cm]{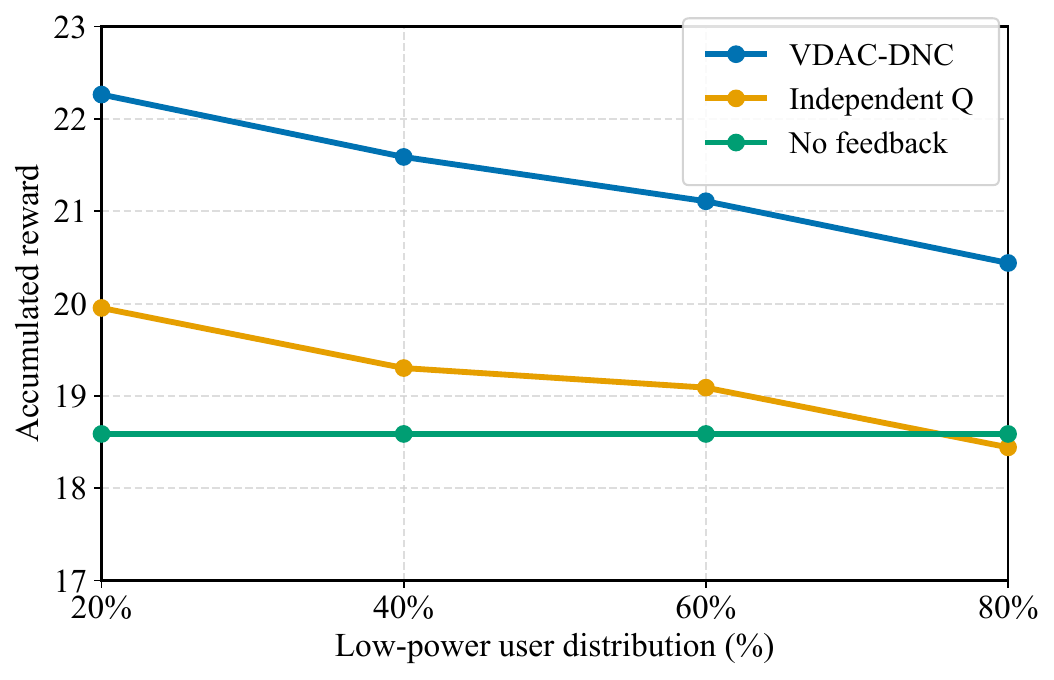}
    \vspace{-0.2cm}
    \nblcaption{The impact of device heterogeneity on the proposed framework}
    \label{fig:device_heterogeneity}
    \vspace{-0.4cm}
\end{figure}
\nbl{In Fig.~\ref{fig:device_heterogeneity}, we show how the accumulated reward is affected by the power budget of devices.
Here, some devices have higher feedback power budgets (e.g., $p^{\text{U}}_v = 1$~W and $p_{\text{max}} = 6$~W) and the remaining devices have lower feedback power budgets (e.g., $p^{\text{U}}_v = 0.001$~W and $p_{\text{max}} = 0.006$~W). In this figure, all users have the same CSI conditions such that we do not consider the impact of CSI. 
%and compare the reward performance of these two groups.
From Fig.~\ref{fig:device_heterogeneity}, we see that 
as the number of users with lower feedback power budgets increases from 20\% to 80\%, the reward of the proposed method increases by up to 11.84\%. This is because the devices with high power budgets can transmit feedback more frequently than the devices with low power budgets, thus providing more channel status and user preference information to assist the BS in optimizing the sub-image transmission.
Fig.~\ref{fig:device_heterogeneity} also shows that the reward gap between the proposed method and the independent Q method increases as the power budget of the devices increases. This is because our designed method allows users to coordinate their feedback transmission when users have sufficient power budgets.}

% \begin{figure}[t]
%     \centering
%     \includegraphics[width=8cm]{Journal_Figure/time_per_ep_vs_range_dm14.pdf}
%     \caption{Runtime versus the DNC search range}
%     \label{dif-range-time}
% \end{figure}

% \begin{figure}[t]
%     \centering
%     \includegraphics[width=8cm]{Journal_Figure/reward_vs_SA_dm14.pdf}
%     \caption{Reward versus the DNC SA steps}
%     \label{dif-SA}
% \end{figure}

% \begin{figure}[t]
%     \centering
%     \includegraphics[width=8cm]{Journal_Figure/convtime_vs_SA_dm14.pdf}
%     \caption{Convergence time versus the DNC SA steps}
%     \label{dif-SA-runtime}
% \end{figure}

% \begin{figure}[t]
%     \centering
%     \includegraphics[width=8cm]{Figures/feedback.pdf}
%     \caption{Reward versus maximum allowable number of feedback}
%     \label{feedback}
% \end{figure}
\section{Conclusion}\label{se:conclusion}
In this paper, we have proposed a novel channel and content preference
feedback enabled SC framework in which a BS transmits
the images to a set of users while users cooperatively
transmit the feedback to the BS for more efficient and
accurate image data transmission.
%allocate sub-channels and select feedback for image transmission .
This image transmission problem has been formulated as an optimization problem, whose goal is to minimize the semantic-weighted MSE between the original image and the regenerated image via optimizing sub-channel allocation of the BS, power allocation and feedback selection of users.
To address this problem, we have designed a VDAC-DNC scheme that introduces actor-critic to VDN, offering higher training efficiency compared to the standard VDN.
In addition, the proposed method with the DNC module is more effective in finding better actions in large discrete action spaces. 
In particular, the DNC uses a continuous action produced by a continuous policy, which is easier to optimize than a complex discrete policy over large discrete action spaces, to construct a small discrete neighboring action space. 
Then, DNC searches for an action with the maximum $Q$ value from the small neighboring action space, thus avoiding traversing the large discrete action space.
Simulation results have shown that the proposed VDAC-DNC scheme significantly outperforms the baselines with regard to the quality of reconstructed images and convergence time.

% \section*{Appendix}
% Appendix

\bibliographystyle{IEEEbib}
\bibliography{references1}
\end{document}